\newcommand{\NP}{\textsf{NP}}
\newcommand{\APX}{\textsf{APX}}
\newcommand{\FPT}{\textsf{FPT}}
\newcommand{\BCEVS}{\textsf{BCEVS}}
\newcommand{\BCEOVS}{\textsf{BCEOVS}}
\newcommand{\bigo}{\mathcal{O}}
\newtheorem{observation}{Observation}
\newtheorem{construction}{Construction}
\begin{document}

\title{Bicluster Editing with Overlaps: A Vertex Splitting Approach\thanks{A preliminary version of this paper has been presented at the $36^{th}$ International Workshop on Combinatorial Algorithms (IWOCA 2025) (see \cite{AbuKhzamIM25}).}\thanks{This research project was supported by the Lebanese American University under the President’s Intramural Research Fund PIRF0056.}}

\author{Faisal N. Abu-Khzam, Lucas Isenmann and Zeina Merchad}
\authorrunning{Abu-Khzam, Isenmann, and Merchad} 
\institute{
Department of Computer Science and Mathematics\\
Lebanese American University\\ 
Beirut, Lebanon.\\
}
\maketitle

\begin{abstract}
The {\sc BiCluster Editing} problem aims at editing a given bipartite graph into a disjoint union of bicliques via a minimum number of edge deletion or addition operations. As a graph-based model for data clustering, the problem aims at a partition of the input dataset, which cannot always obtain meaningful clusters when some data elements are expected to belong to more than one cluster each. 
To address this limitation, we introduce the {\sc Bicluster Editing with Vertex Splitting} problem (\BCEVS{}) which consists of finding a minimum sequence of edge editions and vertex splittings such that the resulting graph is a disjoint union of bicliques. The vertex splitting operation consists of replacing a vertex $v$ with two vertices whose union of neighborhoods is the neighborhood of $v$. 
We also introduce the problem of {\sc Bicluster Editing with One-Sided Vertex Splitting} (\BCEOVS{}) where we restrict the splitting operations to the only one set of the two sets forming the bipartition. 
We prove that the two problems are $\NP$-complete even when restricted to bipartite planar graphs of maximum degree three.
Moreover, assuming the {\sc Exponential Time Hypothesis} holds, there is no $2^{o(n)}n^{O(1)}$-time (resp. $2^{o(\sqrt{n})}n^{O(1)}$-time) algorithm for \BCEVS{} and \BCEOVS{} on bipartite (resp. planar) graphs with maximum degree three, where $n$ is the number of vertices of the graph. 
Furthermore we prove both problems are \textsc{APX}-hard and solvable in polynomial time on trees.
On the other hand, we prove that \BCEOVS{} is fixed parameter tractable with respect to solution size by showing that it admits a polynomial size kernel.

\keywords{Correlation clustering, Bi-cluster editing, Vertex splitting.}

\end{abstract}

\section{Introduction}

Cluster Editing is a classical problem with numerous applications across fields, most importantly in computational biology and gene expressions \cite{rahmann2007exact,tan2020gene}.
The problem, as originally proposed, involves determining whether a given graph $G$ can be transformed into a graph consisting of a disjoint union of cliques through the addition or deletion of at most $k$ edges, where $k$ is a given parameter.
Cluster Editing was shown to be \NP-hard \cite{KM86,Krivanek1986np,shamir2004cluster}, and multiple parameterized and approximation algorithms have been developed to address it~\cite{gramm2005graph,Bocker12,ChenMeng,DAddario2014,fadiel2006computational,fellows2007efficient,guo2009more}.

When the input and output graphs are restricted to bipartite graphs, the problem is referred to as {\sc Bicluster Editing}. The main question is whether it is possible to modify (by deleting or adding) at most $k$ edges in the input bipartite graph so that the resulting graph becomes a disjoint union of bicliques, again $k$ is a given parameter bounding the number of allowed edge modifications. 
This problem is greatly used in the field of computational biology and in the analysis of gene expressions ~\cite{Madeira2004}, among other things.
{\sc Bicluster Editing} has been shown to be \NP-complete~\cite{amit2004bicluster-thesis}, and several parameterized and approximation algorithms have been proposed. 
The simplest fixed-parameter algorithm of Bi-cluster Editing runs in $O(4^k +|E|)$, simply by exhaustively trying all possible editing operations when an induced path of length three is found~\cite{ProttiSS09}. Guo \textit{et al.}~\cite{guo2007} improved the running time to $O(3.24^k +|E|)$.
More recently, Xiao and Kou improved the running time bound to $O^*(2.9312^k)$~\cite{Xiao2022} while Tsur proposed a branching algorithm with a running time of $O^*(2.636^k)$~\cite{Tsur2021} and further improved it to run in $O^*(2.22^k)$ \cite{Tsur2023}.

Cluster Editing with Overlapping Communities extends the traditional Cluster Editing problem by allowing vertices to belong to more than one cluster, or {\em to split} among them. 
This particular vertex splitting operation permits a vertex $v$ to be replaced by two vertices whose combined neighborhoods is the neighborhood of $v$.  
Consequently, the data element represented by $v$ can simultaneously belong to more than one cluster, possibly by splitting $v$ one or more times, which is an obvious practical objective. 
Vertex splitting has been introduced and studied in a number of recent articles ~\cite{abu2018,abukhzam2023clustereditingvertexsplitting,AbuKhzamDIT25}. In this paper, we use this operation for the first time in the realm of biclustering and study the complexity of the corresponding problems.

Biclustering is often used when processing raw data given in some tabular form, where rows correspond to data elements and columns correspond to features or some other form of attributes. This table is viewed as the incidence matrix of a bipartite graph after applying a thresholding technique to ``binarize'' the table entries. 
Typically, clustering algorithms aim at grouping the data elements (the rows). By introducing vertex splitting as another (optional) editing operation, we consider the case where splitting is allowed on both sides of the bipartite graph, as well as the case where it is allowed only on one side. This depends on the user's objective. For example, if data elements are not allowed to belong to more than one cluster then it is possible that the features can be common to clusters.
The two corresponding problems are {\sc Bicluster Editing with Vertex Splitting} (BCEVS) and {\sc Bicluster Editing with One Sided Vertex Splitting} (BCEOVS).
In general, this approach can uncover significant relationships that conventional biclustering techniques might overlook.

\textbf{Our contribution.}  
We prove that both BCEVS and BCEOVS are $\NP$-complete, even when restricted to planar graphs of maximum degree three. In addition, and assuming the {\sc Exponential Time Hypothesis} holds, we prove that there is no $2^{o(n)}n^{O(1)}$-time algorithm for \BCEVS{} and \BCEOVS{} on bipartite graphs with maximum degree three. We also show that there is no $2^{o(\sqrt{n})}n^{O(1)}$-time algorithm when the input is restricted to planar graphs (again, modulo the ETH).  
Furthermore, we prove that both problems are \textsc{APX}-hard. 
On the positive side, we show that \BCEOVS{} is fixed-parameter tractable and admits a polynomial size kernel, and we also show that the two problems are solvable in polynomial time on trees.

\section{Preliminaries}

We adopt the following common graph-theoretic terminology. A graph $G = (V, E)$ is said to be \textit{bipartite} if its vertex set $V$ can be divided into two disjoint sets $A$ and $B$ such that every edge $e \in E$ connects a vertex in $A$ to a vertex in $B$. That is, there are no edges between vertices within the same subset.
If $G$ is a bipartite graph consisting of subsets $A$ and $B$, then a biclique in $G$ is defined by two subsets $A' \subseteq A$ and $B' \subseteq B$ such that every vertex in $A'$ is connected to every vertex in $B'$.
The \textit{open neighborhood} \(N(v)\) of a vertex \(v\) is the set of vertices adjacent to it. The degree of \(v\) is the number of edges incident on \(v\), which is \(|N(v)|\) since we only consider simple graphs.
A \textit{geodesic path}, or shortest path, between two vertices in a graph is a path that has the smallest number of edges among all paths connecting these two vertices.
We denote by $G[X]$ the subgraph of a graph $G=(V,E)$ that is \emph{induced} by the vertices in $X\subset V$. In other words, 
$G[X]$ is formed from $G$ by taking a subset $X$ of its vertices and all of the edges in $G$ that have both endpoints in $X$.

We consider the following operations on a bipartite graph $G= (A,B,E)$: edge addition, edge deletion and vertex splitting. 
Vertex splitting is an operation that replaces a vertex \(v\) by two copies \(v_1\) and \(v_2\) such that \(N(v) = N(v_1) \cup N(v_2)\). The operation results in a new bipartite graph. An \textit{exclusive} vertex split requires that \(N(v_1) \cap N(v_2) = \emptyset\). In this paper, we do not assume a split is exclusive, but our proofs apply to this restricted version.
We define the following new problems.

\vspace{7pt}

\noindent
{\sc Bicluster Editing with Vertex Splitting}

\noindent
\underline{Given:} A bipartite graph \( G = (A,B, E) \), along with positive integer \( k \).

\noindent
\underline{Question:} Can we transform \( G \) into a disjoint union of  bicliques by performing at most \( k \) edits including vertex splitting operations?

\vspace{7pt}

\noindent
{\sc Bicluster Editing with One-Sided Vertex Splitting}

\noindent
\underline{Given:} A graph \( G = (A,B, E) \), along with positive integer  \( k \).

\noindent
\underline{Question:} Can we transform \( G \) into a disjoint union of bicliques by performing at most \( k \) edits where only vertices from the set $B$ are subjected to vertex splitting operations?

\vspace{7pt}

A geodesic of length $3$ in a bipartite graph consists of 4 vertices that are not part of a biclique (simply because the endpoints are non-adjacent). Such a geodesic is treated as a forbidden structure, or conflict, that is 
to be {\em resolved} in order to obtain a disjoint union of bicliques.
To explicate, we say that an operation resolves a geodesic of length 3 in a bipartite graph $G$ if, after applying this operation, the two end vertices of the geodesic are at distance different from $3$.
We say that two geodesics of length $3$ are \textit{independent} if there does not exist an operation which resolves simultaneously both of them.

\begin{observation}
There are six types of operations that can resolve a geodesic $(a,b,c,d)$: deleting one of the edges $ab$, $bc$, $cd$, or adding the edge $ad$ or splitting the vertex $b$ (resp. $c$) such that one copy is adjacent to $a$ (resp. $b$) and not to $c$ (resp. $d$) and the other copy is adjacent to $c$ (resp. $d$) and not to $a$ (resp. $b$). 
Furthermore, there are several ways to split vertex $b$ or $c$ to resolve the geodesic if these vertices are adjacent to other vertices.
\end{observation}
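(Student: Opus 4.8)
\noindent\emph{Proof idea.}
The plan is to prove the statement by a short case analysis over the three kinds of operations, using two elementary monotonicity facts about graph distances. First I would fix notation: since $(a,b,c,d)$ is a geodesic of the bipartite graph $G=(A,B,E)$, the four vertices alternate sides, say $a,c\in A$ and $b,d\in B$, with $ab,bc,cd\in E$ and $\mathrm{dist}_G(a,d)=3$. The key preliminary remark is that bipartiteness forces $\mathrm{dist}(a,d)$ to be odd as long as $a$ and $d$ stay in one component, so an operation resolves the geodesic precisely when it brings $\mathrm{dist}(a,d)$ down to $1$, raises it to at least $5$, or disconnects $a$ from $d$. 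On the graph consisting of the path alone each of the six listed operations achieves one of these (adding $ad$ gives distance $1$; each of the three deletions and each of the two splits disconnects $a$ from $d$), which settles the ``can resolve'' direction; the real content is to show that no other operation ever can.

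For edge additions I would use that adding an edge cannot increase any distance, so a resolving addition must pull $\mathrm{dist}(a,d)$ all the way down to $1$, i.e.\ it is the addition of $ad$; any other new edge leaves the path $a\,b\,c\,d$ intact, whence $\mathrm{dist}(a,d)=3$. For edge deletions and vertex splittings I would use the dual fact that neither operation decreases any distance (a split only redistributes a neighbourhood and creates no new adjacency), so afterwards $\mathrm{dist}(a,d)\ge 3$, and resolving the geodesic forces the destruction of every $a$--$d$ walk of length $3$, in particular of $a\,b\,c\,d$. A deletion destroying this path must remove one of $ab$, $bc$, $cd$. A split destroying it cannot split a vertex outside $\{a,b,c,d\}$ (the path is untouched), nor split $a$ or $d$ (one copy of the endpoint still inherits the edge into the path, recreating a $P_4$ of the same shape and hence, by the parity remark, a length-$3$ conflict); so it splits $b$ or $c$.

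It remains to pin down how to split $b$, the case of $c$ being symmetric. If $b$ is replaced by $b_1,b_2$ with $N(b)=N(b_1)\cup N(b_2)$, then since $a,c\in N(b)$, requiring that no path $a\,b_i\,c\,d$ survive is the same as requiring that no copy of $b$ be adjacent to both $a$ and $c$; combined with $a,c\in N(b_1)\cup N(b_2)$, this forces (after possibly swapping the names $b_1,b_2$) that $a\in N(b_1)\setminus N(b_2)$ and $c\in N(b_2)\setminus N(b_1)$ --- exactly the stated pattern --- while each remaining neighbour of $b$ may be assigned to either copy, which is the ``several ways'' remark. Collecting the survivors --- the three deletions $ab,bc,cd$, the addition $ad$, and the splits of $b$ and of $c$ --- yields the claimed six types.

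I expect the splitting case to be the only mildly delicate step: one has to state carefully that a split is non-expansive on distances and that splitting an endpoint of the geodesic (or a vertex off the path) cannot remove the conflict, so that the search for useful splits reduces to the two internal vertices $b,c$ and, for each, to the single separation pattern above.
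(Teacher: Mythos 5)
Your proposal is correct and complete: the paper states this as an Observation without any proof, and your argument --- parity of $a$--$d$ distances in a bipartite graph, monotonicity of distances under edge addition versus deletion/splitting, the persistence of the conflict when a vertex off the path or an endpoint is split, and the forced separation pattern $a\in N(b_1)\setminus N(b_2)$, $c\in N(b_2)\setminus N(b_1)$ for a useful split of $b$ --- is exactly the reasoning the paper implicitly relies on. Nothing is missing; this simply supplies the justification the paper omits.
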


For a bipartite graph $G=(A,B,E)$, we denote by $bceovs(G,A)$ the minimum length of a sequence of edge editions on $G$ and vertex splittings on $A$ turning $G$ into a disjoint union of bicliques.
On the other hand, 
we denote by $bcevs(G)$ the minimum length of a sequence of edge editions and vertex splittings (applied to vertices of $A$ or $B$) on $G$ to turn $G$ into a disjoint union of bicliques.

\subsection*{Relation between  $bcevs$ and $bceovs$}

Observe that $bcevs(G) \leq bceovs(G,A)$ for every bipartite graph $G=(A,B,E)$.
This inequality is tight, for example because of the path graph with 3 vertices which has $bcevs$ and $bceovs$ numbers equal to $1$ both.
Furthermore, these two parameters are different because there exists bipartite graphs such that $bcevs(G) < bceovs(G,A)$, as shown in Figure \ref{strictly} below.

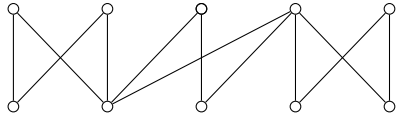
\begin{figure}[hbt!]
    \centering
 \begin{tikzpicture}
    [
        yscale=-1,
        node_style/.style={circle, draw, inner sep=0.05cm} 
    ]
    \def\fscale{5}
	 \node[node_style] (10) at ($\fscale*(-0.25, 0.13)$) {};
	 \node[node_style] (11) at ($\fscale*(-0.5, 0.13)$) {};
	 \node[node_style] (12) at ($\fscale*(-0.5, -0.13)$) {};
	 \node[node_style] (13) at ($\fscale*(-0.25, -0.13)$) {};
	 \node[node_style] (14) at ($\fscale*(0, -0.13)$) {};
	 \node[node_style] (15) at ($\fscale*(0.25, -0.13)$) {};
	 \node[node_style] (16) at ($\fscale*(0, 0.13)$) {};
	 \node[node_style] (17) at ($\fscale*(0, -0.13)$) {};
	 \node[node_style] (18) at ($\fscale*(0.5, -0.13)$) {};
	 \node[node_style] (19) at ($\fscale*(0.25, 0.13)$) {};
	 \node[node_style] (20) at ($\fscale*(0.5, 0.13)$) {};

	 \draw (11) to  (12);
	 \draw (12) to  (10);
	 \draw (10) to  (13);
	 \draw (13) to  (11);
	 \draw (10) to  (14);
	 \draw (10) to  (15);
	 \draw (17) to  (16);
	 \draw (16) to  (15);
	 \draw (15) to  (19);
	 \draw (15) to  (20);
	 \draw (18) to  (19);
	 \draw (18) to  (20);
\end{tikzpicture}
\vspace{10pt}

\caption{Example of a bipartite graph such that $bcevs(G) < bceovs(G,A)$. Here we have $bcevs(G) = 2$ and $bceovs(G,A) = 3$ because in the first case we can split two vertices, one from each side, and in the second case we can split one vertex and delete two edges.}
\label{strictly}
\end{figure}

\section{Complexity of \BCEVS{} and \BCEOVS{}}

The objective is to exhibit a reduction to \textsc{Bicluster Editing with Vertex Splitting} and \textsc{Bicluster Editing with One-sided Vertex Splitting} from a variant of {\sc 3SAT}.
For a 3-CNF formula $F$, we denote by $m$ the number of clauses of $F$.

\begin{construction} \label{construction:reduction}
Consider a 3-CNF formula $F$.
For every variable $v$, we denote by $d(v)$ the number of clauses where $v$ appears and we denote by $c(v)_1, \ldots, c(v)_{d(v)}$ the clauses where $v$ appears.
    
We define a \emph{variable-clause} vertex $v_c$ for every clause $c$ where a variable $v$ appears. Let $j$ be the index of $c$ in the list, defined above, of the clauses where $v$ appears, we define $v_c$ as $v_{6(j-1)+1}$ (resp. $v_{6(j-1)+3}$) if $v$ appears positively (resp. negatively).
We create a graph $G_F$ as follows:
        
\begin{itemize}
        
\item For each variable $v$, we create a cycle $v_1, \ldots, v_{6d(v)}$. We consider the indices modulo $6d(v)$ (so for example $v_{6d(v)+1} = v_1$).
        
\item For each clause $c$, we create a vertex $c$ (we identify a clause and its vertex). 

\item For each clause $c$, containing the variables $u,v,w$, we add the edges $c u_c$, $c v_c$ and $c w_c$ (see the above definition of a variable-clause vertex).

\end{itemize}

\end{construction}

The following Lemma follows immediately from the above construction.

\begin{lemma}
Given a 3-CNF formula $F$ with clauses set $C$, the graph $G_F$ has $19m$ vertices where $m$ is the number of clauses of $F$.
We define $A$ as the vertices of the variables that have an even index.
We define $B$ as the vertices of the variables having an odd index and the vertices of the clauses.
Then the obtained graph $G_F$ is bipartite with bipartition $(A,B)$.
Furthermore, the graph has maximum degree $3$.
\end{lemma}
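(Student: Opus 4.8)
The plan is to read off all three assertions directly from Construction~\ref{construction:reduction}; each is a short counting or colouring argument, which is why the lemma ``follows immediately''. For the vertex count I would partition $V(G_F)$ into the cycle vertices---of which there are $6d(v)$ for each variable $v$---and the $m$ clause vertices. The only substantive input is the identity $\sum_v d(v) = 3m$: it holds because every clause of a $3$-CNF formula contains exactly three distinct variables, so the total number of variable-occurrences in clauses is $3m$. Consequently the cycle vertices number $\sum_v 6 d(v) = 18m$, and $|V(G_F)| = 18m + m = 19m$.

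For bipartiteness I would build a proper $2$-colouring whose two colour classes are exactly the sets $A$ and $B$ from the statement. I would colour each cycle vertex $v_i$ by the parity of its index $i$, and colour every clause vertex with the single colour that makes all of its incident edges bichromatic. Two checks are then needed. First, along a variable cycle $v_1,\dots,v_{6d(v)}$ consecutive indices alternate parity, and since the cycle has even length $6d(v)$ the wrap-around edge $v_{6d(v)}v_1$ is bichromatic as well, so the colouring is consistent on every cycle. Second, every edge incident to a clause vertex $c$ runs to a variable-clause vertex, and all variable-clause vertices of a given cycle---being among the vertices $v_{6(j-1)+1},v_{6(j-1)+3}$---share one parity, so a single colour for $c$ indeed makes its three clause edges bichromatic. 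Identifying the two colour classes with $A$ and $B$ as named then yields the claimed bipartition.

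For the maximum-degree bound I would argue by vertex type: a cycle vertex has exactly its two cycle-neighbours together with at most one further neighbour, namely an incident clause vertex, which is present precisely when the vertex is a variable-clause vertex (and then a unique such clause vertex, since distinct occurrences use distinct cycle vertices); and a clause vertex is adjacent to exactly the three variable-clause vertices of its three literals. Hence every degree is at most $3$, and $\Delta(G_F)=3$. I do not expect a genuine obstacle here, since the statement is entirely structural; the only points needing care are handling the cycle indices modulo $6d(v)$ so that the wrap-around edge is treated correctly, and invoking the ``three distinct variables per clause'' convention so that $\sum_v d(v)$ equals $3m$ exactly rather than merely at most $3m$.
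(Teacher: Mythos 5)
Your counting argument and degree analysis are exactly the intended ones (the paper offers no proof, asserting the lemma ``follows immediately from the construction''), and the identity $\sum_v d(v)=3m$ together with $|V|=18m+m=19m$ is correct. The degree bound is also handled correctly, including the key point that distinct occurrences of a variable use distinct cycle vertices, so each cycle vertex meets at most one clause vertex.

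There is, however, one step that does not go through as written: the final sentence ``Identifying the two colour classes with $A$ and $B$ as named then yields the claimed bipartition.'' Your own colouring shows that every variable-clause vertex $v_c$ has index $6(j-1)+1$ or $6(j-1)+3$, hence odd parity, so the unique colour that makes the three edges at a clause vertex $c$ bichromatic is the \emph{even} colour. Consequently your colour classes are $\{\text{even-index cycle vertices}\}\cup\{\text{clause vertices}\}$ versus $\{\text{odd-index cycle vertices}\}$ --- i.e.\ the clause vertices land on the side of $A$, not of $B$ as the lemma (and your conclusion) asserts. With the partition exactly as stated in the lemma, each edge $c\,v_c$ would join two vertices of $B$, so that partition is not a bipartition at all. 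The colouring argument itself is sound and the graph is bipartite; what fails is the identification with the named classes. You should either flag this as an inconsistency in the statement (the clause vertices must be grouped with the even-index vertices) or, at minimum, not claim that your colour classes coincide with $(A,B)$ as defined. As it stands, the proof silently proves a slightly different bipartition than the one it claims to prove.
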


The cycles created in the above construction play a central role in our proof. The main idea is that ``optimally'' transforming an even-length cycle into a disjoint union of bicliques requires only edge-deletion operations.

\begin{lemma}\label{lemma:cycle}
A cycle of length $6k$, with $k \geq 2$, requires at least $2k$ operations to be turned into a disjoint union of bicliques.
The solution sequences of operations of length $2k$ are the following three ones: delete the edges $v_{1+3i} v_{2+3i}$ for every $i$, delete the edges $v_{2+3i} v_{3+3i}$ for every $i$ and delete the edges $v_{3+3i} v_{4+3i}$ for every $i$.
\end{lemma}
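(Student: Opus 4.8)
The plan is to argue in two parts: first a lower bound of $2k$ operations, then a characterization of the optimal solutions. For the lower bound, I would observe that in any cycle $C$ of length $6k$, every vertex has degree $2$, and the only induced bicliques on $\geq 2$ vertices that can appear as connected components of the final graph are single edges ($K_{1,1}$) or paths on three vertices ($K_{1,2}$): a biclique with a vertex of degree $\geq 3$ cannot be a subgraph of a structure built from a degree-$2$ cycle by operations that each touch a bounded neighborhood — more carefully, I would track how each operation changes the "conflict potential" of the cycle. Concretely, I would use a discharging / amortized counting argument: consider the final disjoint union of bicliques and how the $6k$ original edges and vertices are distributed. Since each biclique component in the output is either a star $K_{1,t}$ or a complete bipartite graph, and the cycle is $2$-regular and connected, I would show that producing such a decomposition from $C_{6k}$ forces at least one operation per every three consecutive edges, i.e. $\geq 6k/3 = 2k$ operations. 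The cleanest route is probably induction on $k$: locate a geodesic of length $3$ in the cycle (any four consecutive vertices form one since the endpoints are at distance $3$ in $C_{6k}$ for $k \geq 2$), note that resolving it costs at least one operation, and that after this operation what remains (in the relevant region) is essentially a cycle of length $6(k-1)$ plus a bounded piece, so that $bce\text{-}cost(C_{6k}) \geq 1 + bce\text{-}cost(C_{6(k-1)})$ plus a careful base-case analysis for $C_{12}$ and $C_6$.

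For the characterization of the length-$2k$ solutions, I would argue that equality forces every operation to be an edge deletion and forces the deletions to be "evenly spaced." First, I would rule out edge additions and vertex splittings in an optimal sequence for an even cycle: an edge addition inside $C_{6k}$ creates a vertex of degree $3$ lying on a cycle, which then requires strictly more cleanup than it saves; a vertex split replaces a degree-$2$ vertex $v$ by $v_1, v_2$ whose neighborhoods cover $N(v)$, and since $|N(v)| = 2$ the only non-trivial split isolates the two neighbors' incidences, which is no cheaper than simply deleting one of the two incident edges and generally worse globally — I would make this precise by a local exchange argument showing any split can be replaced by a deletion without increasing the length. Once we know the optimal solution consists solely of edge deletions, deleting $2k$ of the $6k$ edges must leave a disjoint union of paths, each of which must be a $K_{1,1}$ or $K_{1,2}$ (no longer path, since $P_4$ contains a geodesic of length $3$). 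A length-$6k$ cycle broken into pieces of $1$ or $2$ edges using exactly $2k$ cuts: if $a$ pieces have one edge and $b$ pieces have two edges then $a + b = 2k$ (number of cuts equals number of pieces on a cycle) and $a + 2b = 6k$ (total edges), giving $b = 4k$, $a = -2k < 0$ — so I need to recheck: pieces of $P_3$ use $2$ edges and "cost" being surrounded by $2$ deleted edges, but adjacent pieces share cut-edges... Actually on a cycle, $\#\text{pieces} = \#\text{deleted edges}$ only if no two deleted edges are adjacent; more generally deleting $t$ edges yields $t$ arcs (counting empty arcs when two deletions are adjacent). So with $t = 2k$ deletions and forbidding empty arcs and arcs of length $\geq 3$, each arc has $1$ or $2$ edges, total $6k$, $2k$ arcs, forcing all arcs to have length exactly $3$ vertices / $2$ edges — wait that's $2$ edges each, $2k \cdot 2 = 4k \neq 6k$. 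The resolution is that I must also forbid arcs that are single vertices or single edges from being adjacent in the wrong way; the correct count is that each maximal run of surviving edges between two consecutive deleted edges must have exactly $2$ edges except — I would set up the linear system carefully: if the $2k$ deleted edges partition the cycle into $2k$ arcs with $n_i \geq 0$ edges, $\sum n_i = 6k - 2k = 4k$... no, $\sum n_i = 6k - 2k$ is wrong too since deleted edges aren't in any arc: $\sum n_i = 6k - 2k = 4k$, with each $n_i \in \{0,1,2\}$ (as $P_4$ is forbidden and we could have consecutive deletions), and we want to show the only way this yields a valid biclique decomposition with no wasted deletions is $n_i = 2$ for all $i$, which gives $\sum n_i = 4k$. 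That is consistent. Then the arcs are all $P_3$'s, the deleted edges are pairwise non-adjacent and at distance exactly $3$ along the cycle, which yields precisely the three listed families (indexed by the residue $1,2,3 \bmod 3$ of the first deleted edge). I would finish by verifying each of the three families indeed produces a disjoint union of $2k$ paths $P_3$, hence a valid solution of length $2k$.

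The main obstacle I anticipate is the exchange argument ruling out vertex splittings (and, to a lesser extent, additions) in an optimal sequence for the even cycle: splitting is a genuinely more powerful operation and one must show that for a $2$-regular graph it never beats deletion, ideally via a local replacement that works even when splits interact with later operations in the sequence. I would handle this by a normalization lemma: given any solution sequence, one can reorder/replace operations so that splits on degree-$2$ vertices are simulated by deletions, by induction on the number of splits, tracking that the final graph (up to isomorphism of components) is unchanged and the length does not increase. The base-case analysis for small cycles ($C_6$ and $C_{12}$) in the induction for the lower bound is routine but needs to be stated to anchor the induction, and I would double-check that the hypothesis $k \geq 2$ is exactly what is needed so that "four consecutive vertices form a geodesic of length $3$" is valid (in $C_6$, i.e. $k=1$, antipodal vertices are at distance $3$ but four consecutive vertices have endpoints at distance $3$ as well, so the base case still makes sense — the $k \geq 2$ restriction is presumably to make the inductive step's "remaining cycle" well-defined and is where I would be most careful).
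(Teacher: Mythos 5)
Your plan points at the right statements, but the two steps you defer are exactly where the work lies, and one of them is set up in a way that cannot yield the claimed bound. For the lower bound, the inductive scheme $\mathrm{cost}(C_{6k}) \ge 1 + \mathrm{cost}(C_{6(k-1)})$ only telescopes to a bound of order $k$, not $2k$: peeling off one geodesic per step loses a factor of two. The paper instead packs $2k$ geodesics of length $3$ simultaneously around the cycle (windows of four consecutive vertices, overlapping only in their end-vertices), observes that they are pairwise independent in the sense of the preliminaries (no single deletion, addition or split resolves two of them), and concludes $\ge 2k$ in one step. Your alternative phrasing ``one operation per every three consecutive edges'' is this same packing, but you only assert it; making it precise requires checking independence against all six operation types, including splits, which is the content you skip. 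Note also that your opening claim that the only possible biclique components are $K_{1,1}$ and $K_{1,2}$ is false before edge additions have been excluded (adding $v_1v_4$ to $v_1v_2v_3v_4$ yields a $K_{2,2}$), so it cannot be used inside the lower-bound argument.

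For the characterization, your count of the arcs left by $2k$ deletions ($2k$ arcs, $4k$ surviving edges, each arc of length at most $2$, hence every arc of length exactly $2$) is correct and arguably cleaner than the paper's mod-$3$ rigidity argument --- but it is conditional on first knowing that an optimal sequence consists only of edge deletions. You propose to establish this via a normalization lemma replacing each split by a deletion and correctly flag it as the main obstacle; it is genuinely delicate, because a split changes the vertex set and later operations may act on the copies, so a purely local exchange needs care. The paper avoids this entirely by a budget-exhaustion route: a $2k$-operation sequence must contain at least one deletion (otherwise $3k$ pairwise independent geodesics force $3k$ operations); any two deletions occur at indices congruent mod $3$ (otherwise one exhibits $2k-1$ independent geodesics avoiding both deleted edges, forcing $2k+1$ operations); and then all $2k$ edges in that residue class must be deleted, which uses up the entire budget and thereby excludes additions and splits for free. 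As written, your proposal leaves both the $2k$ lower bound and the elimination of non-deletion operations unproved, so it has genuine gaps at the two load-bearing points of the argument.
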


\begin{proof}

We denote by $v_1, \ldots, v_{6k}$ the consecutive vertices of the cycle in question. For $k \geq 2$, we consider the following geodesics:
For every $i \in \{0, \ldots, k-1\}$, we consider the geodesic $v_{4i + 1}, v_{4i+2}, v_{4i+3}, v_{4i+4}$.
These geodesics do not share edges and inner vertices, and no pair of geodesics have the same end-vertices.
Therefore we need at least one operation to ``resolve'' each of them, which means we need at least $2k$ operations to turn $C_{6k}$ into a disjoint union of bicliques.
    
Consider a sequence of $2k$ operations turning $C_{6k}$ into a union of bicliques. Let us prove that there is only three possible sequences.
Assume that there is no edge deletion in the sequence. Then the geodesics of length 4: $v_{2i}, v_{2i+1}, v_{2i+2}, v_{2i+3}$ for every $i \in \{1, \ldots, 3k\}$ are such that no edge addition and no vertex splitting on the graph can solve two conflicts simultaneously.
We deduce that we need at least $3k$ operations in this case, a contradiction with the assumed length of the sequence.
Thus the sequence of operations has at least one edge deletion.

Assume that there exists $i$ and $j >i$ such that $v_i v_{i+1}$ and $v_{j}v_{j+1}$ are deleted.
We shall now prove that $i=j$ modulo 3. Otherwise we can find $2(k-1)+1$ independent geodesics of length $4$ each on the subpath $v_{i+1}, v_{i+2}, \ldots, v_{j}$ and on the subpath $v_{j+1}, v_{j+2}, \ldots, v_i$. An example is provided in Figure~\ref{fig:C12_different_classes}.
Therefore this sequence would be of length at least $2(k-1)+1+2$, a contradiction.
We deduce that edge deletions only occurs with same index modulo 3.

\begin{figure}[htb!]
    \centering
\begin{tikzpicture}
    [
        yscale=-1,
        node_style/.style={circle, draw, inner sep=0.05cm},
        deleted/.style={postaction=decorate, decoration={markings, mark=at position 0.47 with {\arrow{|}}, mark=at position 0.53 with {\arrow{|}}  } }
    ]
    \def\fscale{3} 
	 \node[node_style] (0) at ($\fscale*(0.5, 0)$) {};
	 \node[node_style] (1) at ($\fscale*(0.43, 0.25)$) {};
	 \node[node_style] (2) at ($\fscale*(0.25, 0.43)$) {};
	 \node[node_style] (3) at ($\fscale*(0, 0.5)$) {};
	 \node[node_style] (4) at ($\fscale*(-0.25, 0.43)$) {};
	 \node[node_style] (5) at ($\fscale*(-0.43, 0.25)$) {};
	 \node[node_style] (6) at ($\fscale*(-0.5, 0)$) {};
	 \node[node_style] (7) at ($\fscale*(-0.43, -0.25)$) {};
	 \node[node_style] (8) at ($\fscale*(-0.25, -0.43)$) {};
	 \node[node_style] (9) at ($\fscale*(0, -0.5)$) {};
	 \node[node_style] (10) at ($\fscale*(0.25, -0.43)$) {};
	 \node[node_style] (11) at ($\fscale*(0.43, -0.25)$) {};


	 \draw[color=green] (0) to  (1);
	 \draw[color=green] (1) to  (2);
	 \draw[color=green] (2) to  (3);
	 \draw (3) to  (4);
	 \draw[color=green] (4) to  (5);
	 \draw[color=green] (5) to  (6);
	 \draw[color=green] (6) to  (7);
 \draw[deleted] (7) to  (8);
 \draw[color=green] (8) to  (9);
	 \draw[color=green] (9) to  (10);
	 \draw[color=green] (10) to  (11);
	 \draw[deleted] (11) to  (0);

\end{tikzpicture}\\
\vspace{10pt}

\caption{Example in the cycle $C_{12}$ where two edge deletions occur at indices that are not equal modulo 3. As there are 3 independent geodesics of length 4 (in green), we need at least 5 operations in this case.}
    \label{fig:C12_different_classes}
\end{figure}
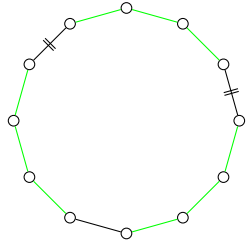

As the sequence is deleting at least one edge, there exists $i$ such that $v_i v_{i+1}$ is deleted.
Because of the previous result, all the other edge deletions of $v_{k} v_{k+1}$ are such that $k = i$ modulo 3.
Suppose that there exists an index $j$ such that $v_{i+3j} v_{i+3j+1}$ is not deleted.
Then we can find $2k$ geodesics of length $4$ such that all these geodesics are independent except for 2 of them which share a the common edge $v_{i+3j}v_{i+3j+1}$.
But as this edge is supposed to not be deleted then these two geodesics are also independent.
We conclude that all the edges $v_{i+3j} v_{i+3j+1}$ are deleted.

\end{proof}

\begin{lemma}\label{lemma:cycle_combinatorics}
Let $A$ be a subset of  $\{0, 1, \ldots, 6k-1\}$ for $k \geq 1$ such that every element of $A$ equals 0 or 2 modulo 6.
Considering an increasing enumeration  $a_1, \ldots, a_p$ of $A$, then for every $i \in [p]$, $a_{i+1}- a_i = 0,2,4$ modulo 6 and there is as much $i \in [p]$ such that $a_{i+1} - a_i = 2$ (modulo 6) as much as $i \in [p]$ such that $a_{i+1}-a_i = 4$ modulo 6. (we consider the indices modulo $p$).
\end{lemma}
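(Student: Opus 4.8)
The plan is to reduce both assertions to a short cyclic telescoping argument on the residues modulo $6$ of the elements of $A$. For each $a \in A$ I would write $r(a) \in \{0, 2\}$ for the residue of $a$ modulo $6$ (well-defined by hypothesis). The first assertion is then immediate: for consecutive elements $a_i, a_{i+1}$ (with indices taken modulo $p$, so $a_{p+1} := a_1$), we have $a_{i+1} - a_i \equiv r(a_{i+1}) - r(a_i) \pmod 6$, and $r(a_{i+1}) - r(a_i)$ lies in $\{0-0,\ 2-0,\ 0-2,\ 2-2\} = \{-2, 0, 2\}$, i.e.\ in $\{0, 2, 4\}$ modulo $6$.

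For the second assertion, I would set $d_i := (a_{i+1} - a_i) \bmod 6 \in \{0, 2, 4\}$ and $g_i := r(a_{i+1}) - r(a_i) \in \{-2, 0, 2\}$ for each $i \in [p]$. The key observation is that $d_i$ determines $g_i$: among the integers $-2, 0, 2$, only $2$ is congruent to $2$ modulo $6$ and only $-2$ is congruent to $4$ modulo $6$, so $d_i = 2 \iff g_i = 2$, $d_i = 4 \iff g_i = -2$, and $d_i = 0 \iff g_i = 0$. In particular $d_i = 2$ forces $(r(a_i), r(a_{i+1})) = (0,2)$ and $d_i = 4$ forces $(r(a_i), r(a_{i+1})) = (2,0)$.

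Finally I would sum the $g_i$ cyclically: since the enumeration wraps around, $\sum_{i=1}^p g_i = \sum_{i=1}^p \big(r(a_{i+1}) - r(a_i)\big) = 0$ by telescoping. Writing $n_2$ and $n_4$ for the number of indices $i \in [p]$ with $d_i = 2$ and $d_i = 4$ respectively, the previous step gives $\sum_{i=1}^p g_i = 2 n_2 - 2 n_4$, whence $n_2 = n_4$, which is exactly the claim. The only points needing a moment of care are the wrap-around index $i = p$ (handled uniformly by the convention $a_{p+1} = a_1$) and the degenerate cases $p = 0$ (statement vacuous) and $p = 1$ (then $d_1 = 0$ and $n_2 = n_4 = 0$). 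I do not anticipate a genuine obstacle: once the residues $r(a_i)$ are introduced, the statement becomes the elementary fact that in any cyclic $\{0,2\}$-valued sequence the number of up-steps equals the number of down-steps.
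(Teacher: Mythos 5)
Your proof is correct, and it finishes by a genuinely different (and cleaner) route than the paper. Both arguments rest on the same key observation: writing $r(a)\in\{0,2\}$ for the residue of $a$ modulo $6$, a step $a_{i+1}-a_i\equiv 2 \pmod 6$ forces $(r(a_i),r(a_{i+1}))=(0,2)$ and a step $\equiv 4 \pmod 6$ forces $(2,0)$. The paper then argues by contradiction: assuming $\alpha\neq\beta$ (say $\alpha>\beta$), it extracts two steps of type $2$ separated only by steps of type $0$ and shows by a short induction on the residues that the second such step cannot occur. You instead sum the signed increments $g_i=r(a_{i+1})-r(a_i)$ cyclically; the telescoping identity $\sum_i g_i=0$ gives $2n_2-2n_4=0$ at once. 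Your global argument buys a shorter proof that avoids the case analysis and, in particular, avoids the paper's (only lightly justified) existence claim that when $\alpha>\beta$ two type-$2$ steps must be cyclically consecutive among the nonzero steps; it also handles the degenerate cases $p\le 1$ and the wrap-around index uniformly. I see no gap.
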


\begin{proof}
First observe that if $a_{i+1}-a_i = 0$ (modulo 6) then $a_{i+1} = 0$ and $a_i = 0$ (modulo 6) or $a_{i+1} = 2[6]$ and $a_i = 2$ (modulo 6).
    
We now define $\alpha$ (resp. $\beta$) as the number of $i \in [p]$ such that $a_{i+1}-a_i = 2$ modulo 6  (resp. $=4$ modulo 6).
Assume (for a contradiction) that $\alpha \not= \beta$.
Thus $\alpha > \beta$ or $\beta > \alpha$.
Assume that $\alpha > \beta$. 
Then there exists $i \in [p]$ and $j \in [p]$ such that $a_{i+1}- a_i = 2$ (modulo 6) and $a_{j+1}-a_j = 2$ (modulo 6) and $a_{k+1}-a_k = 0$ (modulo 6) for every $k \in [i+1,j-1]$.
As $a_{i+1}-a_i = 2$ (modulo 6), then $a_i = 0$ and $a_{i+1} = 2$ modulo 6.
By induction, we show that $a_k = 2$ (modulo 6) for every $k \in [i+1, j]$.
As $a_{j+1}-a_j = 2$ (modulo 6), then $a_{j+1} = 2[6]$ and $a_j = 0$ (modulo 6).
This is a contradiction because $a_j =2$ (modulo 6) based on the previous equality.

In the same way, we prove that it is not possible that $\beta > \alpha$.
We conclude that $\alpha =\beta$.

\end{proof}

\begin{lemma}\label{lemma:path_union}
Let a cycle $v_1, \ldots, v_{6k}$ of length $6k$ and a strictly increasing sequence $i_1, \ldots, i_b$ of integers in $[1,6k]$ such that the edges $v_{i_{j-1}} v_{i_j}$ and $v_{i_j} v_{i_{j+1}}$ are deleted for every $j \in [1,b]$.
Suppose that for every $j \in [1,b]$, $i_{j+1} - i_j$ equals 0 or 2 modulo 6.
Then the remaining paths require at least $2k-b$ operations to be turned into a disjoint union of bicliques.
\end{lemma}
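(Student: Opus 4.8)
The plan is to obtain the bound from Lemma~\ref{lemma:cycle} rather than redo a counting estimate by hand. Setup: removing from the cycle $v_1,\dots,v_{6k}$ the $b$ edges named in the hypothesis disconnects it into exactly $b$ pairwise vertex-disjoint paths $P_1,\dots,P_b$, whose numbers of edges $\ell_1,\dots,\ell_b$ satisfy $\sum_{t=1}^{b}\ell_t=6k-b$. The congruence hypothesis ($i_{j+1}-i_j\equiv 0$ or $2\pmod 6$, indices cyclic) forces every cyclic gap to be at least $2$, so every $\ell_t\ge 1$, and in fact $\ell_t\equiv 5$ or $1 \pmod 6$; these facts are convenient for the application but are not needed for the inequality itself.

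The heart of the argument: let $S$ be any sequence of $s$ operations (edge additions, edge deletions, vertex splittings) that turns the disjoint union $P_1\sqcup\cdots\sqcup P_b$ into a disjoint union of bicliques. Prepending to $S$ the $b$ edge deletions that created $P_1,\dots,P_b$ yields a sequence of $b+s$ operations turning the original cycle $C_{6k}$ into a disjoint union of bicliques. By Lemma~\ref{lemma:cycle} every such sequence has length at least $2k$ (for $k=1$ one checks $bcevs(C_6)\ge 2$ directly), so $b+s\ge 2k$, that is $s\ge 2k-b$, which is precisely the claim.

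I would also record the purely local form of the estimate, since that is usually the convenient shape for later use: by the same geodesic/charging reasoning behind Lemma~\ref{lemma:cycle} (a single operation resolves at most three of the length-$3$ geodesics lying on a path), a path with $\ell$ edges requires at least $\lfloor \ell/3\rfloor$ operations, and summing $\lfloor \ell_t/3\rfloor\ge(\ell_t-2)/3$ over $t$ with $\sum_t\ell_t=6k-b$ again gives $\sum_t\lfloor\ell_t/3\rfloor\ge 2k-b$ (under the hypothesis each $\ell_t$ is $6q+1$ or $6q+5$, making the bound tight). The one delicate point, should one take this route, is the per-path lower bound $\lfloor \ell/3\rfloor$ itself — one must rule out that a vertex splitting or an edge addition (e.g.\ closing a length-$3$ subpath into a $C_4$) beats plain edge deletions on a path — which is exactly the case analysis already absorbed into Lemma~\ref{lemma:cycle}; routing the proof through the reduction of the second paragraph lets that difficulty be inherited rather than repeated.
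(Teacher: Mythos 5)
Your proof rests on a misreading of the hypothesis. The deleted edges are the \emph{two} cycle-edges incident to each vertex $v_{i_j}$ (read $v_{i_j-1}v_{i_j}$ and $v_{i_j}v_{i_j+1}$; this is how the lemma is applied in the proof of Theorem~\ref{theorem:npc}, and the paper's own proof states explicitly that the sum of the lengths of the remaining paths equals $6k-2b$). So $2b$ edges are removed, each $v_{i_j}$ becomes isolated, and the $b$ remaining paths have total length $6k-2b$, not $6k-b$. With the correct count, both of your routes fall short of the claimed bound $2k-b$. The reduction to Lemma~\ref{lemma:cycle} prepends $2b$ deletions and yields only $2b+s\ge 2k$, i.e.\ $s\ge 2k-2b$; even invoking the uniqueness part of Lemma~\ref{lemma:cycle} (the prepended deletions never belong to one of the three optimal patterns, since those never remove both edges at a vertex) only improves this to $s\ge 2k+1-2b$. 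Likewise the crude estimate $\lfloor \ell_t/3\rfloor\ge(\ell_t-2)/3$ gives $\sum_t\lfloor\ell_t/3\rfloor\ge(6k-2b-2b)/3=2k-\tfrac{4b}{3}$, which is weaker than $2k-b$ for every $b\ge 1$ (e.g.\ for $b=3$ it gives only $2k-4$).

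The missing ingredient is exactly what Lemma~\ref{lemma:cycle_combinatorics} supplies: because every $i_j$ is $0$ or $2$ modulo $6$, the gaps $i_{j+1}-i_j$ are $0$, $2$ or $4$ modulo $6$ with equally many gaps $\equiv 2$ and $\equiv 4$, hence the path lengths $\ell_j=i_{j+1}-i_j-2$ are $4$, $0$ or $2$ modulo $6$ with equally many lengths $\equiv 0$ and $\equiv 2$. This balance is what lets the paper evaluate $\sum_t\lfloor\ell_t/3\rfloor$ exactly as $2k-b$: each path of length $\equiv 4 \pmod 6$ contributes an extra $+1$ over the bound $(\ell-2)/3$, and the counting of residues turns that surplus into precisely the missing $+b$ relative to $2k-2b$. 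Your second route is structurally the paper's argument (decompose into paths, sum the per-path bound $\lfloor\ell/3\rfloor$), but by replacing the modular bookkeeping with $(\ell-2)/3$ you discard precisely the surplus the lemma is designed to extract; your first route cannot be repaired, since Lemma~\ref{lemma:cycle} as a black box cannot detect that deleting both edges at a vertex is wasteful.
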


\begin{proof}
Because of Lemma~\ref{lemma:cycle_combinatorics}, there is as much paths of length $2$ and 0 modulo 6.

Let $\alpha, \beta$ and $\gamma$ be the number of paths of length respectively $2,4$ and $0$ modulo 6.
Then $\alpha + \beta + \gamma = b$.
As $\alpha = \gamma$, we deduce that $2\alpha + \beta = b$.
    
We denote by $l_i = 6q_i + r_i$ for every path. We define $Q = \sum_i l_i$. Thus
    \[
    \sum_{i} l_i = \sum_{i} 6q_i + 2 \alpha + 4 \beta = 6 Q + 2\alpha + 4 (b-2\alpha) = 6 Q- 6\alpha + 4 b
    \]

The sum of the lengths of the paths equals $6k-2b$. Therefore $6k - 2b  = 6Q - 6 \alpha + 4b$ and $6k - 6b = 6Q - 6 \alpha$ and $k + \alpha = Q + b$.

As a path of length $l$ needs at least $\lfloor \frac{l}{3} \rfloor$ operations to be turned into a disjoint union of bicliques
    \begin{align*}
        \sum_{i} \left\lfloor \frac{l_i}{3} \right\rfloor &=  \sum 2 q_i + \lfloor r_i \rfloor = \sum_{i} 2 q_i + \beta = 2 Q + \beta \\
        &= 2k + 2 \alpha - 2b + \beta = 2k + 2\alpha - 2b +b - 2\alpha = 2k - b \;.
    \end{align*}

\end{proof}

\begin{theorem} \label{theorem:npc}
\BCEOVS{} and \BCEVS{} are \NP-complete even when restricted to bipartite graphs of maximum degree three.
\end{theorem}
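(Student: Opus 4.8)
The plan is first to observe that both problems lie in \NP{} --- a sequence of at most $k$ operations is a polynomial-size certificate (at most $k$ splittings keep the instance polynomial-size, and we may assume $k<|E|$ since otherwise the answer is trivially ``yes''), and one verifies in polynomial time that the resulting graph is a disjoint union of complete bipartite graphs --- and then to give a polynomial-time reduction from a variant of {\sc 3SAT}, using Construction~\ref{construction:reduction}. It is convenient to take the variant in which each clause has three distinct variables and each variable occurs in at least two clauses (this is still \NP-complete, and the second condition ensures $d(v)\geq 2$, so Lemma~\ref{lemma:cycle} applies to every variable cycle). Given $F$ with $m$ clauses, build $G_F$; by the lemma following the construction it is bipartite, has maximum degree three, and has $19m$ vertices. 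Set the budget $k:=\sum_v 2d(v)+2m$ (which is $8m$ for this variant). The same reduction handles both \BCEVS{} and \BCEOVS{}: in $G_F$ all clause vertices and all cycle vertices that a solver could usefully split lie in $B$, so restricting splittings to $B$ only removes options, while the ``yes''-instance solution below uses no splitting at all.

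The correctness claim is that $F$ is satisfiable iff $bcevs(G_F)\leq k$ (equivalently $bceovs(G_F,A)\leq k$), and the structural fact driving it is the following consequence of Lemma~\ref{lemma:cycle} together with the placement of the variable--clause vertices at cycle indices $\equiv 1$ and $\equiv 3 \pmod 6$: of the three optimal (length-$2d(v)$) ways to turn the cycle of $v$ into disjoint $K_{1,2}$'s, one makes every positive-occurrence vertex of $v$ the \emph{centre} of its star (the ``true'' solution), one does the same for the negative-occurrence vertices (the ``false'' solution), and the third makes neither a centre; moreover, whenever a literal vertex $v_c$ is such a centre its two cycle-neighbours have degree one, so adding back the edge $cv_c$ turns the $K_{1,2}$ into a $K_{1,3}$ still centred at $v_c$. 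For the forward direction, given a satisfying assignment, apply the ``true''/``false'' cycle solution to each variable according to its value and, for each clause, keep the edge to one satisfied (hence central) literal vertex and delete the other two edges at $c$; the result is a disjoint union of $K_{1,2}$'s, $K_{1,3}$'s and isolated vertices, using exactly $k$ operations and no splitting. For the backward direction, given a solution $\sigma$ of size $\leq k$: Lemma~\ref{lemma:cycle} forces at least $2d(v)$ operations on each variable cycle's own edges, using up $\sum_v 2d(v)$ of the budget; one then argues each clause $c$ needs at least two further operations (if no literal edge of $c$ survives with its literal vertex being a star centre, then $c$ together with its surviving neighbourhood and the attached cycle paths contains an induced $P_4$), so in fact each clause uses exactly two extra operations and each cycle uses exactly one canonical solution; declaring $v$ true iff its cycle uses the ``true'' solution then yields an assignment in which every clause has a satisfied literal, i.e.\ $F$ is satisfied.

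The step I expect to be the main obstacle is the lower bound in the backward direction, specifically ruling out clever uses of vertex splitting and of non-canonical cycle edits: a solver might split a clause vertex instead of deleting two of its incident edges, or break a cycle at non-canonical positions to create a central literal vertex more cheaply, and one must show neither beats the canonical strategy. This is exactly what the factor-$6$ spacing of the variable--clause vertices along each cycle, together with the refined counting of Lemmas~\ref{lemma:cycle_combinatorics} and~\ref{lemma:path_union}, is designed to control: any cycle-edit pattern that makes a prescribed literal vertex a centre while deviating from the three canonical solutions costs strictly more on that cycle than it can save on the incident clauses, and splitting is never more efficient than deletion in this graph. With that accounting established, the budget $8m$ is attained exactly on satisfiable formulas and exceeded otherwise, completing the proof.
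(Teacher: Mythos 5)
Your setup (membership in \NP{}, Construction~\ref{construction:reduction}, budget $k=8m$, and the forward direction via the three canonical cycle solutions, of which one makes the positive-occurrence vertices star centres and one the negative ones) matches the paper's proof, and your observation that the reduction needs $d(v)\geq 2$ for Lemma~\ref{lemma:cycle} to apply is a legitimate refinement. The problem is the backward direction, which you yourself flag as ``the main obstacle'' and then do not carry out: you assert that the factor-$6$ spacing ``together with the refined counting of Lemmas~\ref{lemma:cycle_combinatorics} and~\ref{lemma:path_union} is designed to control'' all deviant strategies, and proceed ``with that accounting established.'' That accounting \emph{is} the proof of the theorem; invoking the existence of the right lemmas is not a substitute for the charging argument that uses them.

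Moreover, the one concrete lower-bound claim you do make --- ``each clause $c$ needs at least two further operations'' --- is false as a standalone statement and hides exactly the case the paper has to work hardest on. A solution can spend $0$ or $1$ operations that are attributable to a clause $c$ (in the paper's notation, $|Op(c)|\leq 1$) by instead deleting \emph{both} cycle edges $v_{c-1}v_c$ and $v_cv_{c+1}$ for (two or three of) the variables of $c$, which isolates $v_c$ inside its cycle so that the edge $cv_c$ can survive with no induced $P_4$. The point of the paper's argument is that this does not come for free: the sets $Op(v)$ and $Op(c)$ are pairwise disjoint, and Lemma~\ref{lemma:path_union} (via Lemma~\ref{lemma:cycle_combinatorics}) upgrades the bound $|Op(v)|\geq 2d(v)$ to $|Op(v)|\geq 2d(v)+b(v)$, where $b(v)$ counts the clauses that ``cheat'' in this way at $v$; the double-counting identity $3a_0+2a_1=\sum_v(b(v)+r(v))$ then yields a total of at least $k+a_0+a_1$ operations, forcing $a_0=a_1=0$, $|Op(c)|=2$ for every clause, and $|Op(v)|=2d(v)$ (hence a canonical cycle solution) for every variable. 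Without this amortized charge from clauses to variable cycles, your argument does not rule out solutions of cost $8m$ for unsatisfiable formulas, so the reduction's correctness is not established.
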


\begin{proof}
These problems are clearly in \NP. Let $F$ be a 3-CNF, we denote the set of clauses by $C$ and the set of variables by $V$.
Let $G$ be the bipartite graph obtained by Construction~\ref{construction:reduction}.
We set $k=8m$ where $m$ is the number of clauses. Let us prove that the graph $G$ has a sequence of at most $k$ operations such that it turns $G$ into a disjoint union of bicliques if and only if $F$ is satisfiable.

Assume that $F$ is satisfiable.
For every true (resp. false) variable $v$ we delete the edges $v_{1+3i} v_{2+3i}$ (resp. $v_{2+3i} v_{3+3i}$) for every $i$.
For every clause $c$, there exists a variable $v$ appearing in $c$ which satisfies $c$.
Let $u$ and $w$ be the two other variables appearing in $c$.
We delete the edges $v u_c$ and $v w_c$.

The resulting graph is a union of bicliques because the variable cycles have been turned into disjoint paths of length 2 and for every clause, the remaining edge is connected to the middle of a path of length 2. 
Therefore the connected components are stars with 4 vertices and paths of length 2. We have done $6d(v)/3 = 2d(v)$ deletions for every variable $v$. We have done two edge deletions for every clause.
In total, we have done $2m + \sum_{v \in V} 2d(v) = (2+ 2\cdot3) m = 8m = k$ operations.

    \begin{figure}[htb!]
        \centering
\begin{tikzpicture}
    [
        yscale=-1,
        node_style/.style={circle, draw, inner sep=0.1cm} 
    ]
    \def\fscale{8} 
    
	 \node[node_style] (0) at ($\fscale*(-0.24, 0.01)$) {};
	 \node[node_style] (1) at ($\fscale*(-0.25, 0.07)$) {};
	 \node[node_style] (2) at ($\fscale*(-0.3, 0.12)$) {};
	 \node[node_style, label={above:$a_7$}] (3) at ($\fscale*(-0.37, 0.14)$) {};
	 \node[node_style] (4) at ($\fscale*(-0.43, 0.12)$) {};
	 \node[node_style] (5) at ($\fscale*(-0.48, 0.07)$) {};
	 \node[node_style] (7) at ($\fscale*(-0.48, -0.06)$) {};
	 \node[node_style] (8) at ($\fscale*(-0.43, -0.11)$) {};
	 \node[node_style, label={above:$a_1$}] (9) at ($\fscale*(-0.37, -0.12)$) {};
	 \node[node_style] (10) at ($\fscale*(-0.3, -0.11)$) {};
	 \node[node_style] (11) at ($\fscale*(-0.25, -0.06)$) {};
	 \node[node_style] (12) at ($\fscale*(-0.5, 0.01)$) {};
	 \node[node_style] (6) at ($\fscale*(0.13, 0.01)$) {};
	 \node[node_style] (13) at ($\fscale*(0.11, 0.07)$) {};
	 \node[node_style] (14) at ($\fscale*(0.07, 0.12)$) {};
	 \node[node_style, label={above:$b_7$}] (15) at ($\fscale*(0, 0.14)$) {};
	 \node[node_style] (16) at ($\fscale*(-0.07, 0.12)$) {};
	 \node[node_style] (17) at ($\fscale*(-0.11, 0.07)$) {};
	 \node[node_style] (18) at ($\fscale*(-0.11, -0.06)$) {};
	 \node[node_style] (19) at ($\fscale*(-0.07, -0.11)$) {};
	 \node[node_style, label={below:$b_1$}] (20) at ($\fscale*(0, -0.12)$) {};
	 \node[node_style] (21) at ($\fscale*(0.07, -0.11)$) {};
	 \node[node_style] (22) at ($\fscale*(0.11, -0.06)$) {};
	 \node[node_style] (23) at ($\fscale*(-0.13, 0.01)$) {};
	 \node[node_style] (24) at ($\fscale*(0.5, 0.01)$) {};
	 \node[node_style] (25) at ($\fscale*(0.48, 0.07)$) {};
	 \node[node_style] (26) at ($\fscale*(0.43, 0.12)$) {};
	 \node[node_style] (27) at ($\fscale*(0.37, 0.14)$) {};
	 \node[node_style] (28) at ($\fscale*(0.3, 0.12)$) {};
	 \node[node_style, label={right:$c_9$}] (29) at ($\fscale*(0.25, 0.07)$) {};
	 \node[node_style] (30) at ($\fscale*(0.25, -0.06)$) {};
	 \node[node_style] (31) at ($\fscale*(0.3, -0.11)$) {};
	 \node[node_style, label={above:$c_1$}] (32) at ($\fscale*(0.37, -0.12)$) {};
	 \node[node_style] (33) at ($\fscale*(0.43, -0.11)$) {};
	 \node[node_style] (34) at ($\fscale*(0.48, -0.06)$) {};
	 \node[node_style] (35) at ($\fscale*(0.24, 0.01)$) {};
	 \node[node_style, label={above:$a \vee b \vee c$}] (36) at ($\fscale*(0, -0.24)$) {};
	 \node[node_style, label={below:$a \vee b \vee \overline{c}$}] (37) at ($\fscale*(0, 0.24)$) {};

	 \draw (12) to  (7);
	 \draw (7) to  (8);
	 \draw (8) to  (9);
	 \draw (9) to  (10);
	 \draw (10) to  (11);
	 \draw (11) to  (0);
	 \draw (0) to  (1);
	 \draw (1) to  (2);
	 \draw (2) to  (3);
	 \draw (3) to  (4);
	 \draw (4) to  (5);
	 \draw (5) to  (12);
	 \draw (23) to  (18);
	 \draw (18) to  (19);
	 \draw (19) to  (20);
	 \draw (20) to  (21);
	 \draw (21) to  (22);
	 \draw (22) to  (6);
	 \draw (6) to  (13);
	 \draw (13) to  (14);
	 \draw (14) to  (15);
	 \draw (15) to  (16);
	 \draw (16) to  (17);
	 \draw (17) to  (23);
	 \draw (35) to  (30);
	 \draw (30) to  (31);
	 \draw (31) to  (32);
	 \draw (32) to  (33);
	 \draw (33) to  (34);
	 \draw (34) to  (24);
	 \draw (24) to  (25);
	 \draw (25) to  (26);
	 \draw (26) to  (27);
	 \draw (27) to  (28);
	 \draw (28) to  (29);
	 \draw (29) to  (35);
	 \draw (9) to  (36);
	 \draw (36) to  (20);
	 \draw (36) to  (32);
	 \draw (3) to  (37);
	 \draw (37) to  (29);
	 \draw (37) to  (15);
\end{tikzpicture}

\vspace{10pt}

\caption{The graph constructed by Construction~\ref{construction:reduction} for the 3-CNF $(a \vee b \vee c) \land (a \vee b \vee \overline{c})$.}
\label{fig:enter-label}
\end{figure}
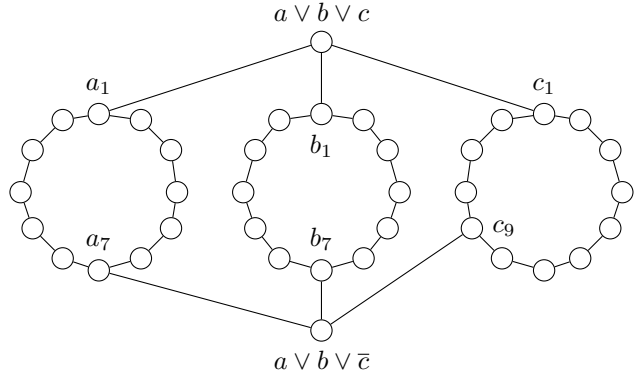

Assume that $G$ can be turned into a disjoint union of bicliques with a sequence of at most $k$ operations. For every variable $v$, we denote by $Op(v)$ the set of edge editions  and vertex splittings done between vertices of the cycle of $v$. For every clause $c$, we denote by $Op(c)$ the set of the splits of $c$ and edge deletions which are incident to $c$ and edge additions between $c$ and the vertices $u_c, u_{c-1}, u_{c+1}, v_c, v_{c-1}, v_{c+1}, w_c, w_{c-1}, w_{c+1}$ where $u,v$ and $w$ are the variables in $c$.
These sets are pair-wise disjoint.

For each variable $v$, according to Lemma~\ref{lemma:cycle}, $Op(v)$ is of size at least $2d(v)$.
    
We denote by $a_0$ (resp. $a_1$) the number of clause $c$ such that $Op(c)$ is of size $0$ (resp. $1$).
We denote by $a_2$ the number of clauses such that $Op(c)$ is of size at least $2$.
Thus $m = a_0 + a_1 + a_2$.
Let us prove that $a_0 =0$ and $a_1 = 0$.

    Let $c$ be a clause such that $Op(c)$ is of size 0.
    In the \BCEOVS{} problem the edges $u_{c-1} u_c$ and $u_c u_{c+1}$ must be deleted for each variable $u$ appearing in $c$.
    Suppose there exists $u$ such that it is not the case, then there still exist the geodesic $u_{c\pm1}, u_c, c, v_c$ where $v$ is another variable appearing in $c$.
    In the \BCEVS{} problem, it is also possible to solve the problems of the geodesics $u_{c\pm1}, u_c, c, v_c$ by splitting the vertex $v$.
    So there are three cases for each variable $u$ appearing in $c$.
    Either both edges $u_{c-1} u_c$ and $u_c u_{c+1}$  are deleted, either one edge among $u_{c-1} u_c$ and $u_c u_{c+1}$ is deleted and $u$ is split, either only $u$ is split.

    Let $c$ be a clause such that $Op(c)$ is of size 1.
    In the \BCEOVS{} problem there exists two variables $u$ and $v$ appearing in $c$ such that the edges $u_{c-1} u_c$, $u_c u_{c+1}$ and $v_{c-1} v_c$, $v_c v_{c+1}$ are deleted.
    Indeed, if $c$ is split or if an edge incident to $c$ is deleted, then there remains two variables $u$ and $v$ appearing in $c$ such that the edges $c u_c$ and $c v_c$ are still present.
    We deduce that the four previous edges must be deleted.
    Otherwise there is an edge addition in $Op(c)$ which occurs between vertices of $u$ and of $v$.
    For geodesics where the third variable \( w \) appears in \( c \), the edges \( u_{c-1} u_c \), \( u_c u_{c+1} \), as well as \( v_{c-1} v_c \) and \( v_c v_{c+1} \), should be removed.
    In the \BCEVS{} problem, there exists two vertices appearing in $c$ such that either both edges $u_{c-1} u_c$ and $u_c u_{c+1}$  are deleted, either one edge among $u_{c-1} u_c$ and $u_c u_{c+1}$ is deleted and $u$ is split, either only $u$ is split.

    For every variable $v$, we denote by $B(v)$ the clauses $c$ such that $v$ is appearing in $c$ and such that $|Op(c)| \leq 1$ and such that the two edges $v_{c-1} v_c$ and $v_c v_{c+1}$ are deleted.
    We denote by $b(v)$ the size of $B(v)$ and by $r(v)$ the number of clauses $c$ such that $v$ is appearing in $c$ and such that $|Op(c)| \leq 1$ and such that $v_c$ is split and such that the two edges $v_{c-1} v_c$ and $v_c v_{c+1}$ are deleted.
    Thus $3 a_0 + 2 a_1 = \sum_{v \in V} b(v) + r(v)$ by double counting.
    
    Let $v$ be a variable.
    For every $c \in B(v)$, the edges $v_{c-1} v_c$ and $v_c v_{c+1}$ are deleted.
 All the indices of $B(v)$ are 0, 2 modulo 6 by definition of the graph.
    In reference to Lemma~\ref{lemma:path_union}, $Op(v)$ is of size at least $2d(v) +  b(v)$.
 The total number of operations is at least
 
    \begin{align*}
       &\geq \sum_{c \in C} |Op(c)| + \sum_{v\in V} |Op(v)| \\
        &\geq 2a_2 + a_1 + \sum_{v \in V} (2d(v) +b(v) + r(v)) \geq 2a_2 + a_1 + \sum_{v \in V} 2d(v) +  \sum_{v \in V} (b(v) + r(v))\\
        &\geq 2a_2 + a_1 + 6m + (3 a_0 + 2a_1) \geq 2m + a_0 + a_1 + \sum_{v \in V} 2d(v) \geq k + a_0 + a_1 \;. \\
    \end{align*}

    \begin{figure}[!htb]
        \centering
\begin{tikzpicture}
    [
        yscale=-1,
        node_style/.style={circle, draw, inner sep=0.1cm} 
    ]
    \def\fscale{7} 
	 \node[node_style, label={left:$v_{c-1}$}] (0) at ($\fscale*(-0.38, -0.17)$) {};
	 \node[node_style] (1) at ($\fscale*(-0.48, -0.33)$) {};
	 \node[node_style] (2) at ($\fscale*(-0.38, -0.5)$) {};
	 \node[node_style] (4) at ($\fscale*(-0.19, -0.5)$) {};
	 \node[node_style, label={right:$v_{c+1}$}] (5) at ($\fscale*(-0.1, -0.33)$) {};
	 \node[node_style, label={above:$v_c$}] (6) at ($\fscale*(-0.19, -0.17)$) {};
	 \node[node_style, label={below:$c$}] (3) at ($\fscale*(-0.1, 0)$) {};
	 \node[node_style, label={right:$u_c$}] (7) at ($\fscale*(-0.19, 0.17)$) {};
	 \node[node_style, label={right:$w_c$}] (9) at ($\fscale*(0.1, 0)$) {};
	 \node[node_style] (8) at ($\fscale*(0.19, -0.17)$) {};
	 \node[node_style] (10) at ($\fscale*(-0.38, 0.17)$) {};
	 \node[node_style] (11) at ($\fscale*(-0.1, 0.33)$) {};
	 \node[node_style] (12) at ($\fscale*(0.19, 0.17)$) {};
	 \node[node_style] (13) at ($\fscale*(-0.48, 0.33)$) {};
	 \node[node_style] (14) at ($\fscale*(-0.38, 0.5)$) {};
	 \node[node_style] (15) at ($\fscale*(-0.19, 0.5)$) {};
	 \node[node_style] (16) at ($\fscale*(0.38, 0.17)$) {};
	 \node[node_style] (17) at ($\fscale*(0.48, 0)$) {};
	 \node[node_style] (18) at ($\fscale*(0.38, -0.17)$) {};

	 \draw (0) to  (1);
	 \draw[dashed] (1) to  (2);
	 \draw[dashed] (2) to  (4);
	 \draw[color=green] (5) to  (6);
	 \draw[color=green] (6) to  (0);
	 \draw (5) to  (4);
	 \draw[color=green] (6) to  (3);
	 \draw[color=green] (3) to  (7);
	 \draw[color=green] (3) to  (9);
	 \draw (9) to  (8);
	 \draw (7) to  (10);
	 \draw (7) to  (11);
	 \draw (9) to  (12);
	 \draw (10) to  (13);
	 \draw (13)[dashed] to  (14);
	 \draw (14)[dashed] to  (15);
	 \draw (15) to  (11);
	 \draw (12) to  (16);
	 \draw[dashed] (16) to  (17);
	 \draw[dashed] (17) to  (18);
	 \draw (18) to  (8);
	 \draw[color=gray] (7) to  (0);
	 \draw[color=gray] (9) to  (5);
	 \draw[color=gray] (7) to  (5);
	 \draw[color=gray] (9) to  (0);
\end{tikzpicture}

\vspace{10pt}

\caption{Example of a clause $c$ where the three green edges incident to $c$ are not deleted and the three green edges incident to $v_c$ are not deleted. The edges in gray should be added or the vertex $v_c$ should be split.}
\label{fig:no_deletion}
\end{figure}
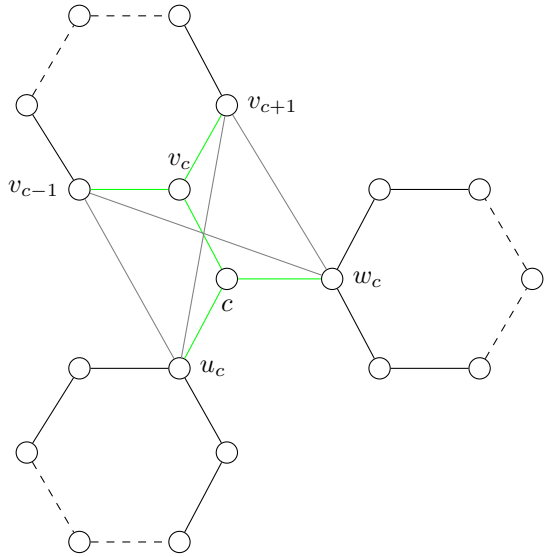

As the sequence is of length at most $k$, we deduce that $a_0 = a_1 = 0$. Thus $|Op(c)| \geq 2$ for every clause $c$. 
We conclude that $|Op(c)| = 2$ for every clause $c$ and $|Op(v)| = 2d(v)$ for every variable $v$. 
In reference to Lemma~\ref{lemma:cycle}, either we delete the edges $v_{1+3k} v_{2+3k}$ for every $k$, either we delete the edges $v_{2+3k} v_{3+3k}$ for every $k$, either we delete the edges $v_{3+3k} v_{4+3k}$ for every $k$.
In the second case, we assign $v$ to positive. Otherwise we assign $v$ to negative.

Let us prove that this assignment satisfies all the clauses. Let $c$ be a clause. If no variable appearing in $c$ is positive, then all these variables are either negative or undefined.
For every variable $v$ appearing in $c$, $v_c$ is connected to a path of length $2$ in the variable cycle. 
Then we need at least 3 operations to solve the 3 conflicts $c, v_c, v_{c+1}, v_{c+2}$ (or $c, v_c, v_{c-1}, v_{c-2}$) for each variable $v$ appearing in $c$.
This contradicts the fact that $Op(c)$ is of size at most $2$. We deduce that there exists a variable $v$ appearing in $c$ which has a positive assignment. 
Therefore, this assignment satisfies all the clauses.
We conclude that the problems are NP-complete.
\end{proof}

\begin{corollary} \label{corollary:planar}
    \BCEVS{} and \BCEOVS{} remain \NP-Complete on bipartite planar graphs with maximum degree three.
\end{corollary}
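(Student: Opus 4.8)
The plan is to re-run the reduction of Theorem~\ref{theorem:npc} verbatim, but starting from a \emph{planar} instance. Recall that \textsc{3SAT} stays \NP-complete when the \emph{incidence graph} of the formula --- the bipartite graph with one vertex per variable, one vertex per clause, and an edge joining a variable to a clause whenever the variable occurs in the clause --- is planar; this is the classical \textsc{Planar 3SAT} problem. We may, as usual, also assume that every clause contains exactly three literals and that every variable occurs in at least two clauses (clauses of smaller arity can be padded with private fresh variables, and variables occurring at most once can be eliminated together with their unique clause; both steps preserve planarity and satisfiability), so that Construction~\ref{construction:reduction} and Lemma~\ref{lemma:cycle} apply unchanged. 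These restricted forms of \textsc{Planar 3SAT} are still \NP-complete.

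Now let $F$ be such a formula and let $H_F$ be its (planar) incidence graph. The whole point is to show that the graph $G_F$ produced by Construction~\ref{construction:reduction} is planar; once this is established, the corollary follows immediately, because the size bound ($19m$ vertices), bipartiteness, maximum degree three, and --- crucially --- the equivalence ``$F$ is satisfiable iff $G_F$ can be turned into a disjoint union of bicliques with at most $k=8m$ operations'' are all inherited directly from Theorem~\ref{theorem:npc} and do not depend on planarity, for both \BCEVS{} and \BCEOVS{}. To prove planarity, fix a planar embedding of $H_F$. Around each variable vertex $v$ its $d(v)$ incident edges occur in some cyclic order; \emph{choose} the enumeration $c(v)_1,\dots,c(v)_{d(v)}$ in Construction~\ref{construction:reduction} to be exactly this cyclic order (the enumeration is a free choice). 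Build a drawing of $G_F$ from the embedding of $H_F$ by the local replacement: leave every clause vertex $c$ in place (it has degree equal to its number of literals, at most three), and replace each variable vertex $v$ by a small disk whose boundary circle is drawn as the cycle $v_1,\dots,v_{6d(v)}$. The variable--clause vertex $v_{c(v)_j}$ sits at cycle position $6(j-1)+1$ or $6(j-1)+3$; as $j$ ranges over $1,\dots,d(v)$ these positions are distinct and strictly increasing, hence they appear along the boundary circle in the same cyclic order as $c(v)_1,\dots,c(v)_{d(v)}$. Consequently the at most three edges leaving each clause vertex can be routed to the corresponding boundary points without crossings, mirroring the rotation of $H_F$ at $v$; the remaining cycle vertices have degree two and lie on the boundary circle. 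This is a plane drawing of $G_F$. Since a planar embedding of $H_F$ is computable in linear time, the reduction is still polynomial, and the proof is complete.

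The only step that genuinely needs care --- the main obstacle --- is this planarity verification: one must make explicit that the per-variable clause enumeration in Construction~\ref{construction:reduction} may be synchronized with the rotation system of a fixed planar embedding of the incidence graph, and that blowing up a variable vertex into an even cycle with the attachment points placed in that cyclic order introduces no crossings. Everything else is a routine inheritance from Theorem~\ref{theorem:npc}; the remaining fuss is purely bookkeeping about which precise flavour of \textsc{Planar 3SAT} to start from, so that the counting identities $\sum_{v} d(v) = 3m$ and $d(v)\ge 2$ used inside that proof remain valid.
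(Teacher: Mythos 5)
Your proof is correct and follows essentially the same route as the paper: reduce from \textsc{Planar 3SAT} and observe that Construction~\ref{construction:reduction} preserves planarity because expanding a variable vertex into a cycle is a planarity-preserving local replacement. You make explicit a detail the paper leaves implicit (synchronizing the per-variable clause enumeration with the rotation system of the embedding so the attachment points appear in the right cyclic order), which is a welcome clarification but not a different argument.
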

\begin{proof}
    Consider an instance of \textsc{3SAT-Planar}. 
    The incidence graph of this instance is planar. 
    The graph produced by the previous construction can be also constructed by replacing every variable vertex by a cycle of a certain length with one vertex adjacent to one clause containing the variable.
    Each of these elementary operations conserves the planarity of the graph. 
    We deduce that the produced graph is planar and that the previous construction gives a reduction from \textsc{3SAT-Planar} to \BCEVS{} and \BCEOVS{} restricted to bipartite planar graphs with maximum degree three. 
    The proof is now complete, knowing that \textsc{3SAT-Planar} is NP-complete \cite{David}.
\end{proof}

Since the previous construction is linear in the number of vertices and (resp. Planar) 3-SAT does not admit a $2^{o(n)}n^{O(1)}$ (resp. $2^{o(\sqrt{n})}n^{O(1)})$ time algorithm, unless the Exponential Time Hypothesis (ETH) fails \cite{Cygan2015}. We conclude with the following:

\begin{corollary}
Assuming the ETH holds, there is no $2^{o(n)}n^{O(1)}$ (resp. $2^{o(\sqrt{n})}n^{O(1)}$) time algorithm for \BCEVS{} and \BCEOVS{} on bipartite (resp. planar) graphs with maximum degree three where $n$ is the number of vertices of the graph.
\end{corollary}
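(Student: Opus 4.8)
The plan is to obtain the bound by contraposition, pushing the known ETH lower bounds for \textsc{3SAT} and \textsc{Planar-3SAT} through the reduction of Construction~\ref{construction:reduction}. First I would pin down the two hypotheses we actually use. Via the Sparsification Lemma, the ETH implies that \textsc{3SAT} on an instance with $m$ clauses cannot be decided in time $2^{o(m)}m^{O(1)}$; and because the standard reduction from \textsc{3SAT} to \textsc{Planar-3SAT} (with planar clause--variable incidence graph) blows the instance up only quadratically, the ETH implies that \textsc{Planar-3SAT} on an instance with $m$ clauses cannot be decided in time $2^{o(\sqrt m)}m^{O(1)}$. These are exactly the statements cited as \cite{Cygan2015} just before the corollary.

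Next I would recall what Construction~\ref{construction:reduction} and its surrounding results give us. For a $3$-CNF formula $F$ with $m$ clauses, the graph $G_F$ has exactly $19m$ vertices, has maximum degree $3$, is bipartite with the bipartition described in the lemma following the construction, and --- when $F$ is a \textsc{Planar-3SAT} instance --- is planar by Corollary~\ref{corollary:planar}; moreover $G_F$ together with the threshold $k = 8m$ is computable from $F$ in time linear in $m$, and by Theorem~\ref{theorem:npc} the pair $(G_F, 8m)$ is a yes-instance of \BCEVS{} (and, equivalently, of \BCEOVS{}) if and only if $F$ is satisfiable. The point to stress is that $n := |V(G_F)| = 19m = \Theta(m)$, so $2^{o(n)} = 2^{o(m)}$ and $2^{o(\sqrt n)} = 2^{o(\sqrt m)}$.

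With this in hand, suppose for contradiction that \BCEVS{} (or \BCEOVS{}) admitted an algorithm running in time $2^{o(n)}n^{O(1)}$ on bipartite graphs of maximum degree three. Feeding it $(G_F, 8m)$ decides the satisfiability of an arbitrary $3$-CNF $F$ in time $2^{o(19m)}(19m)^{O(1)} = 2^{o(m)}m^{O(1)}$, contradicting the first hypothesis and hence the ETH. Symmetrically, an algorithm running in time $2^{o(\sqrt n)}n^{O(1)}$ on planar graphs of maximum degree three, applied to the planar graph $G_F$ produced from a \textsc{Planar-3SAT} instance, would decide its satisfiability in time $2^{o(\sqrt{19m})}(19m)^{O(1)} = 2^{o(\sqrt m)}m^{O(1)}$, contradicting the second hypothesis.

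The argument is essentially mechanical once the reduction is known to be linear; the one place demanding care is the choice of complexity parameter. One must invoke the \textsc{3SAT} lower bound in its ``number of clauses'' form rather than its ``number of variables'' form (this is exactly where the Sparsification Lemma is needed), and one must check that the $\sqrt{\cdot}$ in the planar bound is not destroyed by the vertex blow-up --- which is fine precisely because the blow-up is linear, $\Theta(m) \mapsto \Theta(m)$, rather than merely polynomial. Beyond that bookkeeping, nothing further is required.
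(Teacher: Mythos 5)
Your proposal is correct and follows essentially the same route as the paper, which simply observes that Construction~\ref{construction:reduction} is linear in the number of vertices ($n=19m$) and invokes the standard ETH lower bounds for \textsc{3SAT} and \textsc{Planar-3SAT} from \cite{Cygan2015}; your write-up just makes explicit the details (Sparsification Lemma for the clause-parameterized form, preservation of planarity via Corollary~\ref{corollary:planar}, and the contrapositive bookkeeping) that the paper leaves implicit.
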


\section{\BCEVS{} and \BCEOVS{} on Trees}

In general we have $bcevs(G) \leq bceovs(G,A)$. The idea of our algorithm for computing the $bceovs$ and the $bcevs$ numbers of a tree is to look for a cut vertex separating the graph into a star with at least two vertices and the rest of the tree and to recurse on the subtree.

The first case we investigate is where the cut vertex $y$ is connected to the second subset with only one vertex.

\begin{lemma} \label{lemma:bceovs_rec_simple}
Let $y$ be a cut vertex of a tree $T=(A,B,E)$ partitioning $V(T) \setminus \{y\}$ into $X$ and $Y$ such that $T[X \cup y]$ is a biclique and there exists a vertex at distance two from $y$ in $X$.
If $|N(y) \cap Y| = 1$, then $bceovs(T,A) = 1+ bceovs(T[Y],A)$ and $bcevs(T) = 1 + bcevs(T)$.
\end{lemma}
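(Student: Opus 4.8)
The plan is to read off the structure forced by the hypotheses, and then prove the two identities $bceovs(T,A)=1+bceovs(T[Y],A)$ and $bcevs(T)=1+bcevs(T[Y])$ by a matching upper and lower bound, running the argument for the two parameters in parallel. As for the structure: a biclique sitting inside a tree is a star $K_{1,t}$, and the existence of a vertex of $X$ at distance two from $y$ forces $y$ to be a \emph{leaf} of the star $T[X\cup\{y\}]$ (were $y$ its centre, all of $X$ would be at distance one from $y$). Write $c\in X$ for the centre, $\ell\in X$ for a leaf at distance two from $y$, and $z$ for the unique neighbour of $y$ in $Y$. Since $y$ separates $X$ from $Y$ in the tree $T$, the centre $c$ has no neighbour in $Y$ and every leaf of the star is a leaf of $T$; hence $N(y)=\{c,z\}$ and $T-yz$ is exactly the disjoint union $T[X\cup\{y\}]\sqcup T[Y]$. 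I will use throughout that both parameters are additive over connected components, so $bceovs(T-yz,A)=bceovs(T[Y],A)$ and likewise for $bcevs$.

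For the upper bound, prepend the single operation ``delete $yz$'' to an optimal sequence for $T[Y]$: this leaves the biclique $T[X\cup\{y\}]$ untouched and turns the remainder into a disjoint union of bicliques, and since its only splits are those of the $T[Y]$-sequence it is legal for \BCEOVS{} as well as for \BCEVS{}. Hence $bceovs(T,A)\le 1+bceovs(T[Y],A)$ and $bcevs(T)\le 1+bcevs(T[Y])$.

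For the lower bound, let $S$ be any sequence turning $T$ into a disjoint union of bicliques $H$, and split $S$ into $S_1$, the operations all of whose vertices are copies of $Y$-vertices, and $S_2$, the rest. I would first show $|S_2|\ge 1$: the tuple $(\ell,c,y,z)$ is a geodesic of length three in $T$, and if $S_2$ were empty then $\ell$, $c$, $y$ would be untouched, so in $H$ the vertex $\ell$ would still reach some copy $z^{*}$ of $z$ (one must inherit the edge $yz$, since $N(z)=N(z_1)\cup N(z_2)$ under a split) only through the path $\ell,c,y,z^{*}$ of length three — impossible, as two vertices in a common biclique are at distance at most two. I would then argue that running $S_1$ on $T[Y]$ produces exactly the subgraph of $H$ induced by the copies of $Y$-vertices: no operation of $S_2$ alters the adjacency between two $Y$-copies, because each such operation is incident only to copies of $X\cup\{y\}$-vertices. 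That induced subgraph is a disjoint union of bicliques (an induced subgraph of a biclique is a biclique), and $S_1$ inherits legality, so $bceovs(T[Y],A)\le |S_1|=|S|-|S_2|\le bceovs(T,A)-1$ (and identically for $bcevs$). Combining with the upper bound gives the claimed equalities.

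The hard part will be the bookkeeping behind ``$S_1$ run on $T[Y]$ yields the $Y$-part of $H$'': one must check that after deleting the $S_2$-operations the retained operations are still applicable, in the same order, and that the $Y$--$Y$ adjacency evolves identically whether or not the interleaved $S_2$-operations are present — the one subtlety being a vertex split whose two copies straddle the boundary between $X\cup\{y\}$ and $Y$, which is kept in $S_1$ but with only its $Y$-neighbours redistributed (possibly leaving one copy isolated, which is harmless). The ``distance at most two in a biclique'' step for $|S_2|\ge 1$ also needs a little care, precisely because it is $z$, not $\ell$, that may get split, which is why it is phrased via the copy $z^{*}$ of $z$ inheriting the edge $yz$.
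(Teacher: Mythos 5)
Your proposal is correct and follows essentially the same route as the paper: the upper bound by prepending the deletion of $yz$ to an optimal sequence for $T[Y]$, and the lower bound by using the forced geodesic $(\ell,c,y,z)$ to show at least one operation touches $X\cup\{y\}$ and then restricting the remaining operations to $Y$. Your version is in fact more careful than the paper's (which glosses over the re-applicability of the restricted sequence and has a typo dropping the $+1$ in the $bceovs$ lower bound), and your explicit identification of the star structure with $y$ as a leaf is a useful clarification, but the underlying argument is the same.
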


\begin{proof}
As $|N(y) \cap Y| = 1$, we denote by $z$ the neighbor of $y$ in $Y$.
Let $a$ be a vertex in $X$ at distance 2 from $y$ and $x$ be the vertex of $N(y) \cap N(a)$.
See Figure~\ref{fig:bceovs_tree_simple} for an example.
    
    \begin{figure}[h] 
        \centering
\begin{tikzpicture}
    [
        yscale=-1,
        node_style/.style={circle, draw, inner sep=0.05cm} 
    ]
    \def\fscale{5} 
	 \node[node_style, label={left:$a$}] (0) at ($\fscale*(-0.5, 0)$) {};
	 \node[node_style, label={above:$y$}] (2) at ($\fscale*(0, 0)$) {};
	 \node[node_style, label={above:$z$}] (4) at ($\fscale*(0.25, 0)$) {};
	 \node[node_style, label={above:$x$}] (5) at ($\fscale*(-0.25, 0)$) {};
  \node[node_style] (6) at ($\fscale*(-0.5, -0.25)$) {};
   \node[node_style] (7) at ($\fscale*(-0.5, 0.25)$) {};
	 \node[node_style] (8) at ($\fscale*(0.5, -0.25)$) {};
	 \node[node_style] (9) at ($\fscale*(0.5, 0.25)$) {};
     \node[node_style] (1) at ($\fscale*(0.75, -0.25)$) {};

    \draw (5) to  (0);
    \draw (5) to  (6);
    \draw (5) to  (7);
    
	 \draw (2) to  (4);
	 \draw (2) to  (5);
	 \draw (4) to  (8);
	 \draw (4) to  (9);
      \draw (1) to  (8);
\end{tikzpicture}

\vspace{10pt}
\caption{Example of a tree with a cut vertex $y$ connected to only one vertex $z$ in $Y$ and such that $X$ is a star. 
An optimal solution consists here in deleting the edge $yz$. }
        \label{fig:bceovs_tree_simple}
    \end{figure}
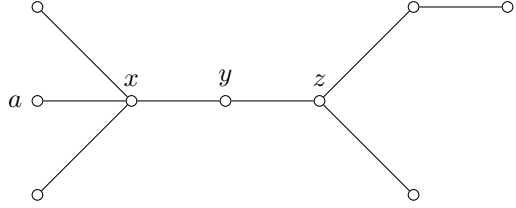
    
    Let $\sigma$ be a sequence of edge deletions and vertex splittings in $T[V-y-X]$.
    We add an edge deletion of $yz$ at the beginning of $\sigma$.
    As this new sequence turns $T$ into a disjoint union of 2-clubs.
    Thus $bcevs(T) \leq bcevs(T[Y])+1$ and $bceovs(T) \leq bceovs(T[Y],A)+1$.

    Let $\sigma$ be a sequence of operations on $G$ turning $G$ into a disjoint union of 2-clubs.
    Because of the geodesic $(a,x,y,z)$, either one of the edges $ax$, $xy$ or $yz$ should be deleted, or add the edge $az$ or split $x$ or $y$.
    We remove all vertex splittings done on $X \cup y$ and all edge editions done on edges incident to $X \cup y$.
    Because of the geodesic $(a,x,y,z)$, at least one operation has been removed.
    We add the edge deletion of $yz$ at the beginning of the sequence $\sigma$.
    This new sequence $\sigma'$ is of length at most the length of $\sigma$ and is still turning $T$ into a disjoint union of 2-clubs.

    Thus, even if restrict the vertex splittings to the vertices of $A$, there exists a minimum sequence deleting the edge $yz$ and doing no operation on edges incident to $X \cup y$ and doing no vertex splittings on $X \cup y$.
    Thus by removing the initial edge deletion $yz$ of $\sigma'$, we get a sequence of operations turning $T[Y]$ into a disjoint union of 2-clubs.
    We deduce that $bcevs(T) \geq bcevs(T[Y]) +1$ and $bceovs(T,A) \geq bceovs(T[Y],A)$.
    We conclude that the announced equalities are true.
\end{proof}

\begin{lemma} \label{lemma:bceovs_rec_multiple}
    Let $y$ be a cut vertex of a tree $T=(A,B,E)$ such that one subset $X$ is a star with at least $2$ vertices.
    If $y$ is connected to at least 2 vertices in $Y$, then 
    \begin{align*}
        bcevs(T)  &= \min( 1 + bcevs(T[Y \cup y]), \; |N(y)\cap Y| + bcevs(T[Y])) \\
        bceovs(T)  &= \min( 1 + bceovs(T[Y \cup y]), \; |N(y)\cap Y| + bceovs(T[Y])) \;.
    \end{align*}
\end{lemma}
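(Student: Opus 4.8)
The plan is the usual two-sided argument, paralleling the proof of Lemma~\ref{lemma:bceovs_rec_simple}. As in that lemma, $T[X\cup y]$ is a star; let $x\in X$ be its centre, let $a\in X$ be a leaf of it (so $a$ is at distance $2$ from $y$ and has degree $1$ in $T$), and note that $xy$ is the only edge of $T$ leaving $X$ and that every vertex of $X\setminus\{x\}$ has degree $1$ in $T$. Write $d=|N(y)\cap Y|\ge 2$ and let $z_1,\dots,z_d$ be the neighbours of $y$ in $Y$. For the upper bound I would exhibit two solutions and take the smaller. \emph{Solution~1:} delete $xy$, then run an optimal sequence for $T[Y\cup y]$; after deleting $xy$ the graph is the disjoint union of the star $T[X]$ (already a biclique) and $T[Y\cup y]$, so this is valid and has length $1+bcevs(T[Y\cup y])$. \emph{Solution~2:} delete the $d$ edges $yz_1,\dots,yz_d$, then run an optimal sequence for $T[Y]$; this leaves the star $T[X\cup y]$, a biclique, disjoint from $T[Y]$, and has length $d+bcevs(T[Y])$. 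Both solutions use only edge deletions, so they are equally valid under the one-sided restriction, and since $T[Y]$ and $T[Y\cup y]$ are induced subgraphs of $T$ the restriction of splittings to $A$ is inherited; hence both $bcevs(T)$ and $bceovs(T,A)$ are at most the claimed minimum.

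For the lower bound, let $\sigma$ be an optimal sequence for $T$; since the argument only deletes operations from $\sigma$ and induced subgraphs inherit the one-sided restriction, it applies verbatim to $bcevs$ and to $bceovs(\cdot,A)$, and we may assume $\sigma$ never adds and later deletes an edge, or vice versa. Distinguish two cases. \emph{Case B:} $\sigma$ performs no operation incident to a vertex of $X$ (no split of an $X$-vertex, no edit of $xy$ or of an intra-$X$ edge) and does not split $y$. Then the subgraph induced on $X\cup\{y\}$ in the final graph is exactly $T[X\cup y]$, so $x$'s biclique is $\{x\}\cup N_T(x)\ni y$; being a whole connected component, it forces $N(y)=\{x\}$ in the final graph, whence $\sigma$ deletes all $d$ edges $yz_i$ and the remaining $|\sigma|-d$ operations act inside $T[Y]$ and turn it into a disjoint union of bicliques, giving $|\sigma|\ge d+bcevs(T[Y])$. \emph{Case A:} otherwise, i.e.\ $\sigma$ contains an operation incident to a vertex of $X$, or splits $y$. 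Let $R$ collect all operations of $\sigma$ that are incident to a vertex of $X$, split an $X$-vertex, or split $y$; then $R\neq\emptyset$, and putting $\sigma'=(\sigma\setminus R)$ with a single deletion of $xy$ prepended, we get $|\sigma'|=|\sigma|-|R|+1\le|\sigma|$. The plan is to check that $\sigma'$ still turns $T$ into a disjoint union of bicliques; after the deletion of $xy$ the star $T[X]$ is a separate biclique and every operation of $\sigma\setminus R$ acts within $T[Y\cup y]$, so removing the prepended deletion yields a sequence on $T[Y\cup y]$ of length $\le|\sigma|-1$ turning it into a disjoint union of bicliques, i.e.\ $|\sigma|\ge 1+bcevs(T[Y\cup y])$. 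Combining the two cases with the upper bound yields the two equalities.

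The main obstacle is the verification in Case~A that $\sigma'$ is a valid solution, that is, that discarding the $y$-splittings and the $X$-incident operations does not destroy biclique-ness on the $T[Y\cup y]$ side. The route I would take: restrict the final graph of $\sigma$ to the vertices descending from $Y\cup\{y\}$ — automatically a disjoint union of bicliques — and note that the discarded $X$-incident operations have no effect there, since those vertices and their copies have been deleted; then, from this restriction, build a disjoint union of bicliques on $V(T[Y\cup y])$ in which $y$ is a single vertex adjacent to $\{z_1,\dots,z_d\}$, reallocating the subtrees hanging at the $z_i$'s among the resulting biclique(s), and read off a sequence for $T[Y\cup y]$ of length $\le|\sigma|-1$. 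The subtlety — and where the hypothesis $d\ge 2$ is used — is to perform this reallocation without increasing the operation count, i.e.\ to show that a splitting of $y$ can always be traded against the single deletion of $xy$, so that the first term $1+bcevs(T[Y\cup y])$ is genuinely not beaten by a more elaborate treatment of $y$ (each $z_i$ hanging an independent subtree is what makes this possible). This exchange step mirrors and extends the one in the proof of Lemma~\ref{lemma:bceovs_rec_simple}. The remaining points — that $T[Y]$ and $T[Y\cup y]$ are strictly smaller graphs, and that the argument reads identically for $bceovs(\cdot,A)$ — are routine.
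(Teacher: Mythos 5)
Your overall architecture is the same as the paper's: the two explicit solutions for the upper bound are exactly the paper's, and your Case~B coincides with the paper's ``all edges $yz$, $z\in Y$, are deleted'' case and is sound. The genuine gap is precisely where you flag it, in Case~A, and the repair you sketch would not work. First, once the splits of $y$ are discarded from $\sigma$, the surviving operations of $\sigma\setminus R$ may refer to copies of $y$ that no longer exist; reinterpreting them on the unsplit $y$ merges the final components of the several copies of $y$ into a single component, which need not be a biclique, so ``every operation of $\sigma\setminus R$ acts within $T[Y\cup y]$'' does not by itself yield a valid sequence for $T[Y\cup y]$. Second, your proposed fix aims at the wrong target: a decomposition of $T[Y\cup y]$ ``in which $y$ is a single vertex adjacent to $\{z_1,\dots,z_d\}$'' forces all the $z_i$ into one biclique with $y$, which in general costs far more than $bcevs(T[Y\cup y])$ (the $z_i$ carry independent subtrees), and nothing in the ``reallocation'' controls the operation count. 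Also, contrary to your remark, the hypothesis $d\ge 2$ is not what makes the exchange possible; it only guarantees the two terms of the minimum are genuinely different recursive calls.

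The clean way to close Case~A --- and essentially what the paper does --- is the restriction principle: if $\sigma$ turns $T$ into a disjoint union of bicliques, then the restriction of $\sigma$ to the induced subgraph $T[Y\cup y]$ (keep edge edits with both endpoints among copies of $Y\cup\{y\}$ and splits of vertices of $Y\cup\{y\}$, all neighbourhoods intersected with $Y\cup\{y\}$) is a valid sequence for $T[Y\cup y]$, because an induced subgraph of a disjoint union of bicliques is again one. It remains to show this restriction has length at most $|\sigma|-1$. If $R$ contains an edge edit incident to $X$ or a split of an $X$-vertex, that operation is simply dropped by the restriction and we are done. The delicate sub-case is when $R$ consists only of splits of $y$: then no operation touches $X$, so the leaf $a$ of the star keeps degree one and the edges $ax$ and $xy_1$ survive, where $y_1$ is a copy of $y$ adjacent to $x$; the biclique of $x$ in the final graph therefore has $\{x\}$ as one whole side, forcing $N(y_1)=\{x\}$, so $y_1$ is isolated in the restriction and the split creating it can be deleted from the restricted sequence, again saving one operation. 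This is the exchange you were looking for, and the leverage comes from the untouched leaf $a$ of the star $X$ (as in the proof of Lemma~\ref{lemma:bceovs_rec_simple}), not from $d\ge 2$.
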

\begin{proof}
    In any case, either $y \in A$ or $y \in B$, let us prove that $bceovs(T,A) \leq |N(y)\cap Y| + bceovs(T-y-X,A)$ and that $bceovs(T,A) \leq 1 + bcevs(T[Y],A)$.
    
    Consider a sequence of $T[Y]$ turning this graph into a  disjoint union of bicliques.
    Apply this sequence to $T$ and delete all edges $yz$ where $z \in N(y)\cap Y$.
    Thus this sequence turns $T$ into a disjoint union of bicliques with $|N(y)\cap Y|$ additional operations (edge deletions).
    Thus $bceovs(T,A) \leq |N(y)\cap Y| + bceovs(T[Y],A)$ and $bcevs(T) \leq |N(y)\cap Y| + bcevs(T[Y])$ .

    Consider a sequence of $G[V-X]$ turning this graph into a  disjoint union of bicliques.
    Apply this sequence to $T$ and then delete the edge $xy$.
    Thus this sequence turns $T$ into a disjoint union of bicliques with $|N(y)\cap X|$ additional operations (edge deletions).
    Thus $bceovs(T,A) \leq 1 + bcevs(T-X,A)$ and $bcevs(T) \leq 1 + bcevs(T-X)$.

Consider a sequence of $G$  of length $k$ turning this graph into a disjoint union of bicliques.
If all edges $yz$ with $z \in Y$ are deleted, then we remove all the operations done on vertices of $X$ and the edge additions incident to $y$.
This new sequence of operations still turns the graph $G$ into a disjoint union of bicliques and is of smaller length.
The restriction of this sequence to $G[Y]$ turns this graph into a disjoint union of bicliques.
Furthermore this sequence is of length at most $k - |N(y)\cap Y|$ (because it does not contain the edge deletions $yz$ where $z \in Y$).
We deduce that $bceovs(T,A) \geq |N(y)\cap Y| + bceovs(T[Y],A)$ and $bcevs(T) \geq |N(y)\cap Y| + bcevs(T[Y])$ .

Otherwise there exists $z \in Y$ such that $yz$ has not been deleted.
Because of the geodesic $(a,x,y,z)$ in $G$, either the edges $ax$ or $xy$ are deleted, either the edge $az$ is added, either the vertex $x$ or $y$ is split.
We replace this operation by deleting the edge $xy$ at the beginning of the sequence.
This new sequence has the same length $k$.
We consider the restriction of the sequence to $G[Y \cup \{y\}]$ which has length at most $k-1$ (because it does not contain the edge deletion $xy$).
This new sequence turns the graph into a disjoint union of bicliques.
We deduce that $bceovs(T,A) \geq 1| + bceovs(T[Y\cup y],A)$ and $bcevs(T) \geq 1 + bcevs(T[Y\cup y])$ .

\end{proof}

As the recursive equations are the same for $bcevs$  and $bceovs$ and as if $T$ is a star, then $bcevs(T) = bceovs(T,A) = 0$, we deduce from Lemma~\ref{lemma:bceovs_rec_simple} and Lemma~\ref{lemma:bceovs_rec_multiple} the following Theorem:

\begin{theorem}
    Let $T=(A,B,E)$ be a tree.
    Then $bceovs(T,A) = bcevs(T)$ and there exists an optimal sequence without vertex splitting.
\end{theorem}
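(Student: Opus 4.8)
The plan is a single simultaneous strong induction on $|V(T)|$ that proves both assertions at once: (i) $bceovs(T,A)=bcevs(T)$, and (ii) this common value is attained by a sequence using only edge deletions and additions (no vertex split). It is convenient to carry the induction out for every bipartite \emph{forest}, since $bceovs$, $bcevs$, and the property of admitting a split-free optimal sequence are all additive over connected components; so it suffices to treat trees, and whenever a recursion produces a disconnected graph I apply the induction hypothesis to each of its (strictly smaller) tree components. The driving force is exactly the remark preceding the theorem: Lemmas~\ref{lemma:bceovs_rec_simple} and~\ref{lemma:bceovs_rec_multiple} compute $bceovs$ and $bcevs$ of $T$ by the \emph{same} formula, expressed through strictly smaller instances, and each such formula is obtained solely by prepending or appending edge deletions to a solution of the smaller instance; hence both the optimal value and split-freeness propagate. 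The base case is immediate: if every component of $T$ is a star (in particular if $T$ is a single vertex or a single edge) then $T$ is already a disjoint union of bicliques, so $bceovs(T,A)=bcevs(T)=0$ and the empty sequence certifies (i) and (ii).

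For the inductive step I must exhibit, for any tree $T$ that is not a star, a cut vertex $y$ meeting the hypotheses of one of the two lemmas. I choose a longest path $p_0,p_1,\dots,p_\ell$ of $T$; since $T$ is not a star, $\ell\ge 3$. Maximality of the path forces every neighbour of $p_1$ other than $p_2$ to be a leaf, so, letting $L$ be the set of leaf-neighbours of $p_1$ (it contains $p_0$), the set $X:=\{p_1\}\cup L$ induces a star centred at $p_1$ with $|X|\ge 2$, and $X$ is precisely the connected component of $T-p_2$ containing $p_1$. I set $y:=p_2$ and $Y:=V(T)\setminus(X\cup\{y\})$; since $\ell\ge 3$ we have $p_3\in Y$, so $y$ is a genuine cut vertex and $|N(y)\cap Y|\ge 1$. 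If $|N(y)\cap Y|=1$, then $T[X\cup\{y\}]$ is the star centred at $p_1$ with leaf set $L\cup\{p_2\}$, which is an induced biclique (induced bicliques of a forest are exactly the stars), and every vertex of $L$ lies at distance $2$ from $y$ inside $X$; thus Lemma~\ref{lemma:bceovs_rec_simple} applies (with its $x=p_1$ and its $a$ any vertex of $L$), and in this case $T[Y]$ is connected. If $|N(y)\cap Y|\ge 2$, then $X$ is a star with at least $2$ vertices and $y$ is joined to at least $2$ vertices of $Y$, so Lemma~\ref{lemma:bceovs_rec_multiple} applies. (I read the evident typo $bcevs(T)=1+bcevs(T)$ in Lemma~\ref{lemma:bceovs_rec_simple} as $bcevs(T)=1+bcevs(T[Y])$.)

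Closing the induction: in the first case Lemma~\ref{lemma:bceovs_rec_simple} gives $bceovs(T,A)=1+bceovs(T[Y],A)$ and $bcevs(T)=1+bcevs(T[Y])$ with $T[Y]$ a strictly smaller tree, so by the induction hypothesis the two right-hand sides coincide, proving (i); and prepending the edge deletion $yz$ (where $z$ is the unique neighbour of $y$ in $Y$) to a split-free optimal sequence for $T[Y]$ yields a split-free optimal sequence for $T$, proving (ii). In the second case Lemma~\ref{lemma:bceovs_rec_multiple} gives the \emph{same} $\min$-expression for $bceovs(T,A)$ and for $bcevs(T)$, involving the strictly smaller instances $T[Y\cup y]$ (a tree) and $T[Y]$ (a forest, treated componentwise); the induction hypothesis equates the two minima term by term, giving (i), and, depending on which branch attains the minimum, either prepending the edge deletion $xy$ to a split-free optimal sequence for $T[Y\cup y]$, or appending the $|N(y)\cap Y|$ edge deletions $\{yz : z\in N(y)\cap Y\}$ to a split-free optimal sequence for $T[Y]$, produces a split-free optimal sequence for $T$, giving (ii). This completes the induction, hence the theorem.

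The step needing the most care — though it is not genuinely hard — is the structural claim that the longest-path vertex $y=p_2$ actually satisfies the precise hypotheses of one of the two recursion lemmas in all corner cases ($\ell=3$, $|L|=1$, $Y$ a single vertex, $Y$ disconnected), combined with keeping assertions (i) and (ii) synchronised throughout a single induction so that optimality is preserved under prepending/appending edge deletions. A minor point to check along the way is that $T[Y]$ and $T[Y\cup y]$ are forests whose components are strictly smaller trees (as induced subgraphs of a tree on vertex sets that induce the appropriate connected or forest structure), so that the induction hypothesis genuinely applies to them.
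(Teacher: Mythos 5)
Your proposal is correct and follows essentially the same route as the paper, which deduces the theorem directly from Lemmas~\ref{lemma:bceovs_rec_simple} and~\ref{lemma:bceovs_rec_multiple} by induction with stars as the base case; you merely make explicit the details the paper leaves implicit (finding the cut vertex via a longest path, treating disconnected $T[Y]$ componentwise, and tracking split-freeness through the recursion).
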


\begin{theorem}
   \BCEVS{} and \BCEOVS{} are solvable in polynomial time in trees. 
\end{theorem}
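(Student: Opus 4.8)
The plan is to use the preceding theorem to reduce to a pure edge-deletion problem and then solve that problem with a bottom-up tree dynamic program. By that theorem, $bceovs(T,A)=bcevs(T)$ and some optimal sequence performs no vertex splitting, so it suffices to compute $bcevs(T)$ in polynomial time and compare it with the input $k$; this settles both \BCEVS{} and \BCEOVS{} on trees. Because no splitting is needed, $bcevs(T)$ is the minimum of $|D|+|A|$ over edge-deletion sets $D\subseteq E(T)$ and sets $A$ of non-edges of $T$ whose combined effect turns $T$ into a disjoint union of bicliques.

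First I would show that edge additions are never needed, so that $bcevs(T)$ equals the minimum number of edges one must delete from $T$ to make every connected component a star — here one uses that a graph which is simultaneously a forest and a disjoint union of bicliques is a disjoint union of stars. Take an optimal splitting-free solution using the fewest additions and suppose it adds at least one edge. Then some biclique component $B=K_{s,t}$ of the resulting graph contains an added edge; let $F_B$ be the set of original tree edges lying inside $B$ (a forest, and a subgraph of $K_{s,t}$) and let $A_B$ be the added edges inside $B$, so $|A_B|=st-|E(F_B)|\ge(s-1)(t-1)$ because $F_B$ is a forest on at most $s+t$ vertices. Now alter the solution inside $B$ only — the rest of the target graph is untouched and remains a disjoint union of bicliques, since $B$ is a connected component: discard $A_B$ and instead delete a minimum edge set of $F_B$ making it a star forest. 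For a bipartite tree with part sizes $p\le q$, keeping just one incident edge at each of the $q$ larger-side vertices deletes only $p-1$ edges and leaves a disjoint union of stars (one centred at each smaller-side vertex); summing over the components of $F_B$, this costs at most $\min(s,t)-1\le(s-1)(t-1)\le|A_B|$ deletions (edgeless components cost nothing). The altered solution is thus no longer and uses strictly fewer additions — a contradiction — so some optimum uses deletions only.

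Next I would solve the remaining problem, minimum edge deletions turning $T$ into a disjoint union of stars, by a routine tree dynamic program. Root $T$ arbitrarily, and for each vertex $v$ compute, over the subtree $T_v$, the minimum number of deletions making $T_v$ a star forest under each of three conditions: $v$ is the centre of its star; $v$ keeps no edge to any child (so $v$ is isolated, or will be a leaf of a star centred at its parent); and unconstrained. Each satisfies an easy recurrence over the children — in the centre case every child is, independently, either cut off (cost $1$ plus its unconstrained value) or kept as a leaf, which forces all of that child's own edges to be cut; the unconstrained value additionally permits keeping the edge to exactly one child, which then plays the centre. Every vertex is handled in time linear in its degree, so the computation runs in $O(n)$ time, and $bcevs(T)$ is the unconstrained value at the root. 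This establishes that \BCEVS{} and \BCEOVS{} are solvable in polynomial (indeed linear) time on trees.

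The main obstacle will be the first step. The recursion implicit in Lemmas~\ref{lemma:bceovs_rec_simple} and~\ref{lemma:bceovs_rec_multiple} already accounts for additions and splittings, but it is not transparent that it can be evaluated in polynomial time, since the ``keep $y$'' alternative in Lemma~\ref{lemma:bceovs_rec_multiple} is a genuine second recursive call; ruling out additions is what collapses the task to the star-forest deletion problem handled cleanly by the $O(1)$-state program above. Some care is needed with the degenerate components of $F_B$ (isolated vertices and components that are already stars, which require no deletion) and with verifying that the exchange never increases the total number of operations, i.e.\ that $\min(s,t)-1\le(s-1)(t-1)$ for all positive integers $s$ and $t$.
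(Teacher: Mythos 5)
Your proof is correct, but it takes a genuinely different route from the paper's. The paper implements the recursions of Lemmas~\ref{lemma:bceovs_rec_simple} and~\ref{lemma:bceovs_rec_multiple} directly, via a dynamic program over the $O(n^2)$ subtrees $T[i,j]$ induced by intervals of a postorder numbering (ordered by decreasing depth, so that the relevant cut vertex always separates a star from the rest); this is precisely how it sidesteps the branching you worry about in the ``keep $y$'' alternative, since both recursive calls land on intervals of the same numbering. You instead invoke the preceding structural theorem only to discard splittings, then add a self-contained exchange argument showing additions are never needed either --- the count $|A_B|\ge(s-1)(t-1)$ versus the $\min(s,t)-1$ deletions needed to shatter a bipartite tree into stars is exactly right, including the degenerate cases $\min(s,t)=1$ --- and finish with the standard three-state star-forest deletion DP. What your route buys is a cleaner target problem and a linear-time algorithm instead of quadratic, at the price of still leaning on the paper's Lemmas~\ref{lemma:bceovs_rec_simple} and~\ref{lemma:bceovs_rec_multiple} (through the theorem that eliminates splits), so it is a replacement for the algorithmic half of the argument rather than for the whole development. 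One small point worth making explicit in a final write-up: an optimal sequence may be assumed to consist of a set $D$ of deleted tree edges and a disjoint set of added non-edges (never deleting and re-adding), which is what justifies the identity $|A_B|=st-|E(F_B)|$ rather than only an inequality; with that remark in place the argument is complete.
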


\begin{proof}
    
    Consider a tree $T$ with $n$ vertices and any vertex as the root.
    We consider a postorder numbering from $1$ to $n$ of the vertices such that the deepest branches are visited first.
    
    For every vertex $x$ we denote by $\phi(x)$ the minimum descendant of $x$.
For any $i$ and $j$ such that $j$ is a ancestor of $i$, we define $T[i,j]$ as the induced subgraph of $T$ from the vertices $\{i, i+1, \ldots, j\}$.
Therefore for every subtree $T[i,j]$ where $j$ is an ancestor of $i$, the numbering is still a postorder numbering of the vertices in decreasing depth order.
    
We define $t[i,j] = bcevs(T[i,j])$. Thus $bcevs(T) = t[1,n]$. See Figure~\ref{fig:tree_postorder} for an example.

    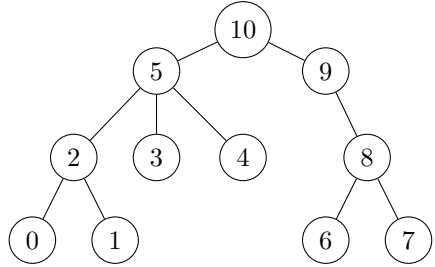
\begin{figure}[h] 
        \centering
\begin{tikzpicture}
    [
        yscale=-1,
        node_style/.style={circle, draw, minimum size=0.5cm} 
    ]
    \def\fscale{5} 
	 \node[node_style] (1) at ($\fscale*(-0.28, 0.28)$) {$1$};
	 \node[node_style] (2) at ($\fscale*(-0.39, 0.06)$) {$2$};
	 \node[node_style] (3) at ($\fscale*(-0.17, 0.06)$) {$3$};
	 \node[node_style] (4) at ($\fscale*(0.06, 0.06)$) {$4$};
	 \node[node_style] (5) at ($\fscale*(-0.17, -0.17)$) {$5$};
	 \node[node_style] (6) at ($\fscale*(0.28, 0.28)$) {$6$};
	 \node[node_style] (7) at ($\fscale*(0.5, 0.28)$) {$7$};
	 \node[node_style] (8) at ($\fscale*(0.39, 0.06)$) {$8$};
	 \node[node_style] (9) at ($\fscale*(0.28, -0.17)$) {$9$};
	 \node[node_style] (10) at ($\fscale*(0.06, -0.28)$) {$10$};
	 \node[node_style] (0) at ($\fscale*(-0.5, 0.28)$) {$0$};

	 \draw (0) to  (2);
	 \draw (2) to  (5);
	 \draw (5) to  (10);
	 \draw (10) to  (9);
	 \draw (9) to  (8);
	 \draw (8) to  (6);
	 \draw (8) to  (7);
	 \draw (4) to  (5);
	 \draw (3) to  (5);
	 \draw (1) to  (2);
\end{tikzpicture}

\vspace{10pt}

\caption{Example of a tree and its postorder numbering ordered by decreasing depth. For example the children of $8$ are $6$ and $7$ and $\phi(9) = 6$ (the minimum descendant of $9$).}
        \label{fig:tree_postorder}
    \end{figure}

Consider $i <j$ where $j$ is an ancestor of $i$.
We denote by $x$ the parent of $i$ and by $y$ the parent of $y$.
As $x$ is the parent of $i$, $y$ is a cut vertex of $T[i,j]$ separating the set of vertices $X = \{i, \ldots, x\}$ from the rest of the tree.
As the numbering is in decreasing depth order, $T[i,x]$ is a star centered on $x$ (all the children of $x$ are leaves).
    Thus $T[X]$ is a star.
    
    If $d(y) = 2$ (see Figure~\ref{fig:bceovs_trees_algo_simple}), then by Lemma~\ref{lemma:bceovs_rec_simple}, where $z$ is the other neighbor of $y$, we have
    \[ t[i,j] =
    \begin{cases}
        1 + t[y+1, j] & \text{if } y \not=j \\
        1 + t[x+1, z] & \text{otherwise}.
    \end{cases}
    \]

    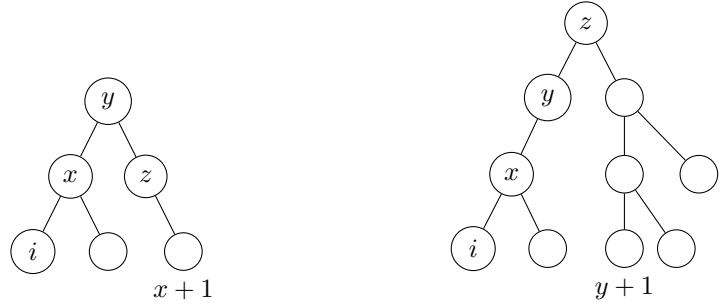
\begin{figure}[!h]
        \centering
        \begin{subfigure}[b]{0.49\textwidth}
            \centering
\begin{tikzpicture}
    [
        yscale=-1,
        node_style/.style={circle, draw, minimum size=0.5cm} 
    ]
    \def\fscale{2} 
	 \node[node_style] (1) at ($\fscale*(0, 0.5)$) {};
	 \node[node_style] (2) at ($\fscale*(-0.25, 0)$) {$x$};
	 \node[node_style] (0) at ($\fscale*(-0.5, 0.5)$) {$i$};
	 \node[node_style] (3) at ($\fscale*(0, -0.5)$) {$y$};
	 \node[node_style] (4) at ($\fscale*(0.25, 0)$) {$z$};
	 \node[node_style, label={below:$x+1$}] (5) at ($\fscale*(0.5, 0.5)$) {};

	 \draw (0) to  (2);
	 \draw (1) to  (2);
	 \draw (2) to  (3);
	 \draw (3) to  (4);
	 \draw (4) to  (5);
\end{tikzpicture}
            \caption{Case where $y=j$, so $y$ has no ancestor in $T[i,j]$ and $z$ is the other child of $y$.}
            \label{fig:bceovs_trees_algo_simple1}
        \end{subfigure}
        \hfill
        \begin{subfigure}[b]{0.49\textwidth}
            \centering
\begin{tikzpicture}
    [
        yscale=-1,
        node_style/.style={circle, draw, minimum size=0.5cm} 
    ]
    \def\fscale{3} 
	 \node[node_style] (1) at ($\fscale*(-0.17, 0.5)$) {};
	 \node[node_style, ] (2) at ($\fscale*(-0.33, 0.17)$) {$x$};
	 \node[node_style] (0) at ($\fscale*(-0.5, 0.5)$) {$i$};
	 \node[node_style] (3) at ($\fscale*(-0.17, -0.17)$) {$y$};
	 \node[node_style] (4) at ($\fscale*(0, -0.5)$) {$z$};
	 \node[node_style] (5) at ($\fscale*(0.17, -0.17)$) {};
	 \node[node_style] (6) at ($\fscale*(0.5, 0.17)$) {};
	 \node[node_style] (7) at ($\fscale*(0.17, 0.17)$) {};
	 \node[node_style, label={below:$y+1$}] (8) at ($\fscale*(0.17, 0.5)$) {};
	 \node[node_style] (9) at ($\fscale*(0.4, 0.5)$) {};

	 \draw (0) to  (2);
	 \draw (1) to  (2);
	 \draw (2) to  (3);
	 \draw (3) to  (4);
	 \draw (4) to  (5);
	 \draw (5) to  (6);
	 \draw (5) to  (7);
	 \draw (7) to  (8);
	 \draw (7) to  (9);
\end{tikzpicture}
            \caption{Case where $y <j$, so $y$ has only $x$ as a child and $z$ is the parent of $y$.}
            \label{fig:bceovs_trees_algo_simple2}
        \end{subfigure}
        \caption{Examples of cases where $d(y) = 2$. In these cases the edge $yz$ can be deleted.}
         \label{fig:bceovs_trees_algo_simple}
    \end{figure}

    then $bcevs(T[i,j]-X) = t[x+1, j]$.
    and $T-y-X$ is the disjoint union of $T[y+1,j]$ and the $T[\phi(k), k]$ where $k$ is a child of $y$.
    See Figure~\ref{figure:bceovs_trees_algo_multiple}.
    According to Lemma~\ref{lemma:bceovs_rec_multiple}, $bcevs(T[i,j]) = \min(1 + bcevs(T-A), d(y)-1 + bcevs(T-y-A))$.
    Thus, 
    \[
    t[i,j] = \min(1 + t[x+1,j], d(y)-1 + t[y+1,j] + \sum_{k \in children(y), k\not=x} t[\phi(k),k])
    \]

    \begin{figure}[!h] 
        \centering
\begin{tikzpicture}
    [
        yscale=-1,
        node_style/.style={circle, draw, minimum size=0.5cm} 
    ]
    \def\fscale{5} 
	 \node[node_style] (1) at ($\fscale*(-0.24, 0.39)$) {};
	 \node[node_style,] (2) at ($\fscale*(-0.37, 0.13)$) {$x$};
	 \node[node_style] (0) at ($\fscale*(-0.5, 0.39)$) {$i$};
	 \node[node_style] (3) at ($\fscale*(-0.24, -0.13)$) {$y$};
	 \node[node_style] (4) at ($\fscale*(0.02, -0.39)$) {};
	 \node[node_style] (5) at ($\fscale*(0.28, -0.13)$) {};
	 \node[node_style, label={below:$y+1$}] (7) at ($\fscale*(0.5, 0.13)$) {};
	 \node[node_style] (10) at ($\fscale*(-0.11, 0.13)$) {$k_1$};
	 \node[node_style] (8) at ($\fscale*(0.15, 0.13)$) {$k_2$};
	 \node[node_style, label={below:$\phi(k_1)$}] (9) at ($\fscale*(-0.11, 0.39)$) {};
	 \node[node_style] (12) at ($\fscale*(0.02, 0.39)$) {};
	 \node[node_style, label={below:$\phi(k_2)$}] (13) at ($\fscale*(0.15, 0.39)$) {};
	 \node[node_style] (14) at ($\fscale*(0.28, 0.39)$) {};

	 \draw (0) to  (2);
	 \draw (1) to  (2);
	 \draw (2) to  (3);
	 \draw (3) to  (4);
	 \draw (4) to  (5);
	 \draw (5) to  (7);
	 \draw (3) to  (10);
	 \draw (3) to  (8);
	 \draw (10) to  (9);
	 \draw (10) to  (12);
	 \draw (8) to  (13);
	 \draw (8) to  (14);
\end{tikzpicture}

\vspace{10pt}

\caption{Example where $d(y) \geq 3$ and $y \in A$. In this case either we delete the edges incident to $y$ except $xy$ and we use recursion, either we split $y$ to make $\{i, \ldots, x, y\}$ a cluster and we use recursion.}
        \label{figure:bceovs_trees_algo_multiple}
    \end{figure}
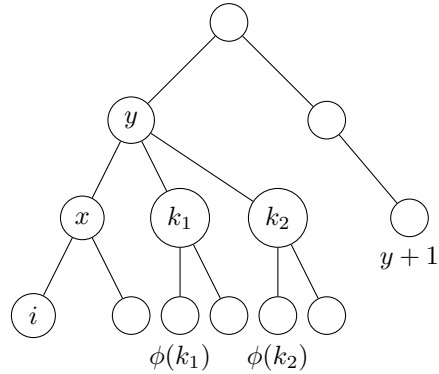

Therefore the resulting running time is in $O(n^2)$ by a recursive algorithm with memorization: at each time we compute the value of a $t[i,j]$ we store it in an array so that when we need this value later we get it from this store array if the value exists.
    We initialize our array with $t[i,i] = 0$ for every $i \in \{1, \ldots, n\}$.
\end{proof}

\section{Parameterized Complexity of \BCEOVS{}}

We show that the \textsc{Bicluster Editing with One-sided Vertex Splitting} problem is fixed-parameter tractable with respect to the total number of allowed operations. To do this, we present a kernelization algorithm (i.e., reduction procedure) for \BCEOVS{} that guarantees polynomial-size kernels.
In our reduction procedure we utilize an equivalent formulation of the \BCEOVS{}, again assuming 
 only the vertices of the set $B$ are allowed to split. The objective is to interpret the one-sided splitting from the set $B$ side as a partition of the set $A$, which boils down to finding a cover of the vertices of a bipartite graph $G=(A,B,E)$ such that the restriction to $A$ is a partition.

\begin{definition}
An $A$-partitioning cover $C$ of a bipartite graph $G=(A,B,E)$ is a set of subsets of $A \cup B$ covering the vertices of $G$ such that the restrictions of the subsets of $C$ to $A$ is a partition of $A$.
We define the cost of $C$ as follows:

    \begin{itemize}
        \item For every vertex $a \in A$, there exists a unique subset $X$ of $C$ such that $a \in X$ such that the cost of $a$ is defined as $|(B \cap X) \setminus N(a)| + |\overline{X} \cap N(a)|$ (where $\overline{X}$ denotes the complementary of $X$ in $A \cup B$).
        \item For every vertex $b \in B$, the cost of $b$ is defined as $j-1$ where $j$ is the number of subsets of $C$ containing $b$.
    \end{itemize}
    Finally, the cost of $C$ is defined as: $cost(C) = \sum_{a \in A} cost(a) + \sum_{b \in B} cost(b)$.
    
\end{definition}

The cost of a vertex $a \in A$ will correspond to the number of edited edges incident to $a$ and the cost of a vertex $b \in B$ will correspond to the number of times a vertex $b$ is split.
We will say that the edge $ab$, where $a \in$ will be \textit{deleted} if the unique subset $X$ of $C$ containing $a$ does not contain $b$.

\vspace{5pt}

\begin{figure}[!htb]
    \centering
\begin{tikzpicture}
    [
        yscale=-1,
        node_style/.style={circle, draw, inner sep=0.1cm} 
    ]
    \def\fscale{8} 
	 \node[node_style, dashed] (0) at ($\fscale*(-0.23, -0.01)$) {$a_2$};
	 \node[node_style] (1) at ($\fscale*(-0.06, -0.18)$) {$b_1$};
	 \node[node_style, dashed] (2) at ($\fscale*(0.11, -0.01)$) {$a_3$};
	 \node[node_style] (3) at ($\fscale*(-0.06, 0.15)$) {$b_2$};
	 \node[node_style] (4) at ($\fscale*(0.27, -0.18)$) {$b_3$};
	 \node[node_style, dashed] (5) at ($\fscale*(0.44, -0.01)$) {$a_4$};
	 \node[node_style] (6) at ($\fscale*(0.27, 0.15)$) {$b_4$};
	 \node[node_style, dashed] (7) at ($\fscale*(-0.39, -0.01)$) {$a_1$};

     \draw[rounded corners=5pt, color=gray] ($\fscale*(0.02, -0.26)$) -- ($\fscale*(-0.5, -0.26)$) -- ($\fscale*(-0.5, 0.26)$) -- ($\fscale*(0.02, 0.26)$) -- cycle;

    \draw[rounded corners=5pt, color=gray] ($\fscale*(-0.14, -0.22)$) -- ($\fscale*(-0.14, 0.2)$) -- ($\fscale*(0.18, 0.2)$) -- ($\fscale*(0.18, -0.22)$) -- cycle;

\draw[rounded corners=5pt, color=gray] ($\fscale*(0.22, -0.22)$) -- ($\fscale*(0.22, 0.2)$) -- ($\fscale*(0.5, 0.2)$) -- ($\fscale*(0.5, -0.22)$) -- cycle;

	 \draw (0) to  (1);
	 \draw (1) to  (2);
	 \draw (2) to  (3);
	 \draw (3) to  (0);
	 \draw (2) to  (4);
	 \draw (4) to  (5);
	 \draw (5) to  (6);
	 \draw (6) to  (2);
	 \draw (1) to  (7);
	 \draw (7) to  (3);
\end{tikzpicture}

\vspace{10pt}

\caption{Example of an $A$-partitioning cover of cost 4 where $A = \{a_1, a_2, a_3, a_4\}$. $cost(b_1) = cost(b_2) = 1$ because both vertices are in 2 subsets and $cost(a_3) = 2$ because the endvertices of the edges $a_3 b_3$ and $a_3 b_4$ are not in the same subset.}
\label{fig:cost_example}
\end{figure}
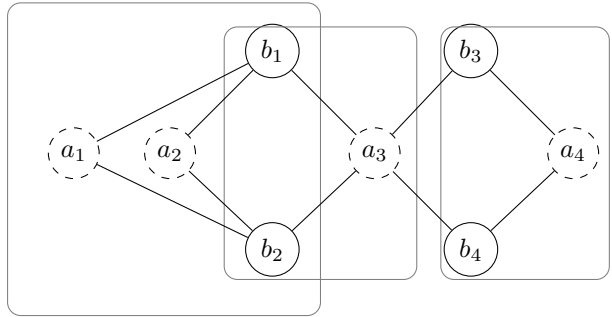

\begin{lemma} \label{lemma:cover_equivalence}
    A bipartite graph has an $A$-partitioning cover of cost at most $k$ if and only if there exists a sequence of length at most $k$ of edge editions and vertex splittings on $B$.
\end{lemma}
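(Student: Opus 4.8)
The plan is to prove both directions of the equivalence by showing how to transform an $A$-partitioning cover into a sequence of operations of the same cost, and vice versa. This is essentially a bookkeeping argument that matches the two cost functions term by term.

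\medskip

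\noindent\textbf{From a cover to a sequence.} First I would take an $A$-partitioning cover $C$ of cost at most $k$ and build the operation sequence in two phases, edge edits first and splittings second (which is legitimate since, as noted earlier for cluster editing, all edge editions may be performed before vertex splittings). For the edge-editing phase, for each $a \in A$ let $X_a$ be the unique subset of $C$ containing $a$: I delete every edge $ab$ with $b \notin X_a$ and add every edge $ab$ with $b \in B \cap X_a$ but $b \notin N(a)$. The number of such edits at $a$ is exactly $|(B\cap X_a)\setminus N(a)| + |\overline{X_a}\cap N(a)| = cost(a)$. After this phase, the neighborhood of each $a$ is precisely $B \cap X_a$. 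For the splitting phase, for each $b \in B$ that lies in $j$ subsets $X^{(1)},\dots,X^{(j)}$ of $C$, I split $b$ into $j$ copies $b^{(1)},\dots,b^{(j)}$ where $b^{(i)}$ is made adjacent to $A \cap X^{(i)}$ (the current neighborhood of $b$ is $\bigcup_i (A\cap X^{(i)})$ because $C$ restricted to $A$ is a partition, so this is a valid split; note the copies may share no neighbors, i.e. even exclusive splits suffice, matching the remark in the preliminaries). This costs $j-1 = cost(b)$ splits. Summing, the total number of operations is $\sum_{a\in A}cost(a)+\sum_{b\in B}cost(b) = cost(C) \le k$. Finally I must check that the resulting graph is a disjoint union of bicliques: each block of the partition together with its associated $B$-copies forms a biclique (every $a$ in block $X$ is joined to exactly $B\cap X$, and the copy $b^{(i)}$ attached to $X^{(i)}$ is joined to exactly $A\cap X^{(i)}$), and distinct blocks share no vertices, so the components are exactly the bicliques indexed by subsets of $C$.

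\medskip

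\noindent\textbf{From a sequence to a cover.} Conversely, given a sequence of length at most $k$ producing a disjoint union of bicliques $H$, I again reorder so that all edge edits precede all splits. Each vertex $a \in A$ is never split, so it survives into $H$ and belongs to a unique biclique component; each vertex $b \in B$ has some number $j_b \ge 1$ of copies in $H$, lying in $j_b$ components. Define the cover $C$ to have one subset per biclique component $K$ of $H$, namely $(A\cap V(K)) \cup \{b \in B : $ some copy of $b$ lies in $K\}$. Since each $a$ is in exactly one component, the restriction to $A$ is a partition, so $C$ is an $A$-partitioning cover. Now I bound its cost: $cost(b) = j_b - 1$, and the total number of splits in the sequence is exactly $\sum_b (j_b-1)$. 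For $cost(a)$: after the splitting phase $a$'s neighborhood in $H$ is $B\cap X_a$ (identifying the copies of a vertex with the vertex), and the number of edge edits incident to $a$ is at least the symmetric difference between its original neighborhood $N(a)$ and its final neighborhood's underlying vertex set, which is exactly $|(B\cap X_a)\setminus N(a)| + |\overline{X_a}\cap N(a)| = cost(a)$. Summing, $cost(C) \le (\text{number of edge edits}) + (\text{number of splits}) \le k$.

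\medskip

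\noindent\textbf{Main obstacle.} The delicate point is the interaction between edge edits and splits when counting $cost(a)$ in the second direction: an edge incident to $a$ could in principle be deleted and later an edge to a copy of the same endpoint re-added, so I need the ``edge edits before splits'' normalization to hold without increasing the length, and I need to argue that no operation is ``wasted'' — i.e. that the number of edge edits touching $a$ is at least the symmetric difference of $N(a)$ with its final neighborhood. A clean way to handle this is to observe that reordering to put edits first never increases the count (splitting a vertex and then editing its copies' edges can be simulated by editing the original vertex's edges first, then splitting), and that once edits precede splits, each $a\in A$'s final neighborhood is determined entirely by the edit phase, so the lower bound on $cost(a)$ is immediate. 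The rest is routine double counting, and the biclique-structure verification is straightforward from the definition of $A$-partitioning cover.
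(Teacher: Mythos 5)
Your proposal is correct and follows essentially the same route as the paper: the forward direction builds the identical sequence (edge edits realizing each $cost(a)$, then $j-1$ splits per $b$ realizing each $cost(b)$), and the backward direction extracts the same cover from the biclique components, using that $A$-vertices are never split so the restriction to $A$ is a partition. Your extra care with the ``edits before splits'' normalization and with stating the cost bound as an inequality rather than an equality only tightens the bookkeeping; it does not change the argument.
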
 
\begin{proof}
    Suppose that $G$ has an $A$-partitioning cover of cost at most $k$.
    Consider the following sequence of operations:
    \begin{itemize}
        \item For every vertex $a \in A$, consider the unique subset $P$ of $C$ such that $a \in P$ (by definition of an $A$-partitioning cover): add all edges $ab$ where $b \in (B \cap P) \setminus N(a)$ and remove all edges $ab$ where $b \in (B \setminus P) \cap N(a)$.
        \item For every vertex $b \in B$, consider $j$ the number of subsets containing $b$.
        Then make $j-1$ copies of $b$ so that there are as much copies as there are subsets containing $b$.
        For every subset $P$ containing $b$, connect the copy $b_P$ to the $A$ vertices of $P$. 
    \end{itemize}
    This sequence of operations is of length the cost of $C$.
    Thus the length is at most $k$.
    Furthermore the resulting graph is a union of bicliques.

Assume that $G$ has a sequence of operations turning $G$ into a union of bicliques $B_1, \ldots, B_p$ called $G'$.
We define a cover $C$ of $G$ as follows:
For every biclique in $G'$, we define a subset $S$ of the vertices of $G$ consisting in the vertices of $A$ in the biclique and the vertices of $B$ which have a copy in the biclique. As no vertex of $A$ are split, then the restriction of $C$ to $A$ is a partition of $A$.
    Therefore it is an $A$-partitioning cover of $G$.
    Let us compute the cost of this one-sided partition.

    Let $a$ be a vertex of $A$.
    Let $K$ be all the vertices of $B$ such that $ab$ is deleted from $G$.
    Then the subset $P$ containing $a$ is such that $|\overline{P} \cap N(a)| = |K|$.
    Let $R$ be all the vertices of $B$ such that $ab$ is added to $G$.
    Then the subset $P$ containing $a$ is such that $|P \setminus N(a)| = |R|$.

    Let $b$ be a vertex of $B$.
    Let $j$ be the number of times $b$ is split.
    Then there is $j+1$ copies of $b$ in $G'$.
    Thus there is $j+1$ subsets containing $b$.
    Thus the $cost(b) = j+1-1 = j$.

    We conclude that the cost of the one-sided partition is the length of the sequence.
    Thus there exists an $A$-partitioning cover of cost at most $k$.
\end{proof}

\begin{definition}
    Let $G$ be a graph.
    A twin class is a maximal subset of the vertices of $G$ which have the same neighborhood.
\end{definition}

Observe that the twin classes of a graph partition the vertices of this graph.


\begin{lemma}[Twin adapted $A$-partitioning cover]
    \label{lemma:adapted_cover}
    Let $G=(A,B,E)$ be a bipartite graph.
    There exists an $A$-partitioning cover $C$ of $G$ of minimum cost such that for every twin class $T$ and every subset $X$ of $C$, then either $T \cap X = \emptyset$ or $T \subseteq X$.
\end{lemma}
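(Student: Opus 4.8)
The plan is to start from an arbitrary $A$-partitioning cover $C$ of minimum cost and perform a local modification that, for each twin class $T$, forces every subset of $C$ to contain either all of $T$ or none of it, without increasing the cost. Since twin classes partition $V(G)$, I would treat the $A$-twin classes and the $B$-twin classes separately, handling one twin class at a time and arguing that the cost does not go up.

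For a twin class $T\subseteq A$: all vertices of $T$ have the same neighborhood $N$, so the cost of a vertex $a\in T$ placed in a subset $X$ depends only on $X$, namely $|(B\cap X)\setminus N| + |\overline X\cap N|$. Because the restriction of $C$ to $A$ is a partition, the vertices of $T$ are distributed among several subsets $X_1,\dots,X_r$ of $C$; among these, pick one, say $X_j$, minimizing the per-vertex cost $f(X_i) := |(B\cap X_i)\setminus N| + |\overline{X_i}\cap N|$. Now move every vertex of $T$ into $X_j$ (formally: replace each $X_i$, $i\neq j$, by $X_i\setminus T$, and replace $X_j$ by $X_j\cup T$). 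The restriction to $A$ is still a partition, the $B$-costs are unchanged (we only added/removed $A$-vertices), and the total $A$-cost contributed by $T$ goes from $\sum_i |T\cap X_i|\, f(X_i)$ down to $|T|\, f(X_j)$, which is at most as large. Vertices of $A$ outside $T$ are untouched. Hence the cost does not increase, so the new cover is still of minimum cost, and now $T$ is entirely inside one subset.

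For a twin class $T\subseteq B$: here the subtlety is that $B$-vertices may be duplicated (a vertex $b$ can lie in several subsets of $C$), and $A$-vertices ``see'' the $B\cap X$ part of their subset. Fix such a $T$ with common neighborhood $N\subseteq A$. For each subset $X$ of $C$, I would normalize $X\cap T$ to be either $\emptyset$ or all of $T$: the natural choice is to put $T\subseteq X$ precisely when $X$ ``should'' be incident to $T$, e.g.\ when $X$ already contains some vertex of $T$, or (after the $A$-side has been normalized so each twin class of $A$ is inside one subset) when $N\cap X\neq\emptyset$. Adding the missing vertices of $T$ to such an $X$ does not change any $A$-cost if we are careful — for $a\in N$ already in $X$, the term $|(B\cap X)\setminus N(a)|$ only involves $B$-vertices not adjacent to $a$, and vertices of $T$ are adjacent to exactly the vertices of $N$, so they never contribute to $|(B\cap X)\setminus N(a)|$ for any $a\in X\cap N$; and if $X$ contains no vertex of $N$ we simply do not add $T$ to it. Removing a stray vertex of $T$ from a subset $X$ that contains no vertex of $N$ also does not increase cost, and it can only decrease the $B$-cost $j-1$. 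The one thing to check is that $T$ is still covered: it is, because at least one original subset contained each vertex of $T$, and the set of subsets incident to $N$ is nonempty whenever $N\neq\emptyset$; the degenerate case $N=\emptyset$ (isolated $B$-vertices) is handled trivially by keeping them in a single subset.

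The main obstacle I anticipate is the bookkeeping on the $B$-side: one must verify simultaneously that (i) the restriction to $A$ remains a partition (easy, since we never touch $A$ in this phase), (ii) every $A$-cost term is non-increasing — which requires the observation that a $B$-twin-class vertex contributes to $cost(a)$ only through the $|\overline X\cap N(a)|$ term when $a$ is adjacent to it and placed outside, or through $|(B\cap X)\setminus N(a)|$ when $a$ is non-adjacent and placed together, so aligning $T$ with $N$ is exactly the cost-monotone move — and (iii) every $B$-cost term $j-1$ is non-increasing (we only remove stray memberships, never add superfluous ones). I would also make sure the order of operations is right: normalize all $A$-twin classes first (so each is inside a unique subset), then normalize the $B$-twin classes; after both phases every twin class, $A$ or $B$, satisfies the claimed property, and the cost never rose, so the resulting cover is an $A$-partitioning cover of minimum cost with the desired twin-homogeneity. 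Finally, since all modifications preserve minimality, we conclude the existence of such a cover, which is the statement of the lemma.
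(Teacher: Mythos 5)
Your handling of the $A$-twin classes is correct and is essentially the paper's own argument: pick the subset minimizing the per-vertex cost $f(X)=|(B\cap X)\setminus N|+|\overline{X}\cap N|$ and move the whole class there; $B$-costs and the costs of other $A$-vertices are untouched, so the cost cannot increase.

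The $B$-side, however, has a genuine gap. Your proposed normalizations (``put $T\subseteq X$ whenever $X$ already meets $T$'' or ``whenever $N\cap X\neq\emptyset$'') are not cost-monotone. You only check the effect on vertices $a\in X\cap N$, but a subset $X$ with $N\cap X\neq\emptyset$ may also contain many vertices $a\in A\setminus N$; for each such $a$, every vertex of $T$ newly inserted into $X$ enters $(B\cap X)\setminus N(a)$ and raises $cost(a)$ by one (these are forced edge additions). Concretely, if $X$ contains one vertex of $N$ and five vertices of $A\setminus N$, inserting a single $b\in T$ into $X$ saves $1$, pays $1$ in $cost(b)$, and pays $5$ in new edge additions, a net increase of $5$. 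The rule ``$T\subseteq X$ iff $X$ meets $T$'' fails similarly: with $T=\{b_1,b_2\}$, $b_1\in X_1$ only and $b_2\in X_2$ only, merging the patterns raises the $B$-cost from $0$ to $2$ with no guaranteed savings. The fix (and what the paper does) is not to design a target pattern from $N$, but to select, among the twins of $b$, the one whose existing membership pattern $P_1,\dots,P_r$ minimizes the quantity $r-1+\sum_{i=1}^{r}|P_i\cap\overline{N(b)}|+|\overline{\cup_{i=1}^{r}P_i}\cap N(b)|$ (its split cost plus all edge edits on edges incident to it), and to copy that pattern onto every other twin. Since each edge of a bipartite graph is incident to exactly one $B$-vertex, these attributable costs are disjoint across the members of $T$, so replacing each twin's pattern by the minimizing one is cost-non-increasing, and the cover remains valid because the chosen pattern is nonempty. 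You would need to restructure your $B$-phase along these lines for the proof to go through.
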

\begin{proof}
    Let $C$ be a minimum $A$-partitioning cover of $G$.
    Consider a vertex $a \in A$ having the minimum cost among its twins.
    Let $P$ the subset of $C$ containing $a$.
    Let $a'$ be a twin of $a$ which is not in $P$.
    We move $a'$ to $P$.
    The new cost of $a'$ is $cost(a)$.
    By minimality of $a$, the cost of $X$ decreases.
    We can now assume that all the twins of $a$ are in the same subset.
    We repeat this operation for every vertex of the set $A$.

    Let $b$ be a vertex of $B$, whose subsets containing $b$ are noted $P_1, \ldots, P_r$, minimizing the quantity $r-1 + \sum_{i=1}^r P_i \cap \overline{N(b)} + \overline{\cup_{i=1}^r P_i} \cap N(b)$.
    Let $b'$ be a twin of $b$.
    Then move $b'$ so that it is contained in the same subsets as $b$.
    After this operation the cost of $X$ has decreased because the difference between the new cost of $X$ and the original cost of $X$ is:
    \[
    r-1 + \sum_{i=1}^r P_i \cap \overline{N(b)} + \overline{\cup_{i=1}^r P_i} \cap N(b) - (r'-1 + \sum_{i=1}^{r'} P'_i \cap \overline{N(b')} + \overline{\cup_{i=1}^{r'} P'_i} \cap N(b'))
    \]
    which is at most $0$ by minimality of $b$.
    We repeat this operation for every twin of $b$.
    By repeating these operations for every vertex of $B$, we have found an $A$-partitioning cover of minimum cost which satisfies the property.

\end{proof}




    

\begin{lemma}
    \label{lemma:number_A_twin_classes}
    Let $G=(A,B,E)$ be a connected bipartite graph with $k$ twin classes in $A$.
    Then the cost of any $A$-partitioning cover is at least $\sqrt{k}-1$.
\end{lemma}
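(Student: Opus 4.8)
The plan is to show that any $A$-partitioning cover $C$ of cost $e$ forces $k \le (e+1)^2$, which immediately yields $e \ge \sqrt{k}-1$. I would derive this from two estimates, each of order $e+1$: an upper bound on the number of parts of $C$ that contain a vertex of $A$, and an upper bound on the number of twin classes of $A$ that a single such part can meet. Since the restriction of $C$ to $A$ is a partition of $A$, every twin class of $A$ is contained in the union of the $A$-meeting parts; fixing for each twin class $T$ one part $\phi(T)$ with $T\cap\phi(T)\neq\emptyset$, the map $\phi$ has image of size at most $\#\{A\text{-meeting parts}\}$ and each fibre has size at most $\max_X\#\{\text{twin classes meeting }X\}$, whence $k\le(e+1)(e+1)$.

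For the number of parts I would invoke Lemma~\ref{lemma:cover_equivalence}: from $C$ we obtain a sequence of exactly $e$ operations (edge additions/deletions and splits of $B$-vertices) turning $G$ into a disjoint union of bicliques $G'$, in which the biclique associated with a part $X$ consists of $A\cap X$ together with copies of the vertices of $B\cap X$. Each of the $e$ operations raises the number of connected components by at most one (an edge addition cannot raise it at all; an edge deletion or a vertex split raises it by at most one), and $G$ is connected, so $G'$ has at most $e+1$ components. Distinct parts of $C$ never share an $A$-vertex, so the nonempty portions of $G'$ coming from distinct $A$-meeting parts are vertex-disjoint and each is confined to its own set of components of $G'$; hence there are at most $e+1$ such parts.

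For the per-part estimate, fix a part $X$ and pick representatives $a_1,\dots,a_r$ of the distinct twin classes meeting $A\cap X$, so that $N_G(a_1),\dots,N_G(a_r)$ are pairwise distinct. Unwinding the definition of cost gives $cost(a_i)=|N_G(a_i)\,\triangle\,(B\cap X)|$, which vanishes only when $N_G(a_i)=B\cap X$, and this happens for at most one index. Hence $\sum_{i=1}^{r} cost(a_i)\ge r-1$, while on the other hand $\sum_{i=1}^{r} cost(a_i)\le\sum_{a\in A\cap X}cost(a)\le cost(C)=e$, so $r\le e+1$. Combining with the previous paragraph, $k\le(e+1)^2$, i.e.\ $e\ge\sqrt{k}-1$.

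The per-part step is completely routine; the step carrying the real weight is the bound on the number of parts, where connectedness of $G$ is genuinely needed (without it the statement is false: a disjoint union of edges has arbitrarily many twin classes in $A$ but an $A$-partitioning cover of cost $0$). The mild care required there is to treat parts that contain $A$-vertices but no $B$-vertex (their $A$-vertices become isolated in $G'$, yet still occupy components disjoint from those of every other part) and to make the ``each operation adds at most one component'' accounting precise. I would also note that summing $r\le e_X+1$ over parts instead of taking a product actually gives the stronger linear bound $e\ge(k-1)/2$, but the quadratic bound stated here is all that is needed for the kernelization.
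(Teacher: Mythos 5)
Your proof is correct, and its decisive step takes a genuinely different route from the paper's. The two arguments share the per-part estimate: writing $cost(a)=|N(a)\,\Delta\,(B\cap X)|$ for $a$ in its part $X$, at most one twin class meeting $X$ can contribute cost-$0$ vertices, so the $A$-cost inside $X$ is at least the number of twin classes meeting $X$ minus one; this is essentially the second half of the paper's proof. Where you diverge is in bounding the number of parts. The paper works with a minimum \emph{twin-adapted} cover (Lemma~\ref{lemma:adapted_cover}), builds an auxiliary incidence graph between the parts and the vertices of $B$, proves $cost(C)\ge\sum_{b\in B}(d(b)-1)$ by a delicate analysis of splits versus deleted edges, extracts $cost(C)\ge c-1$ from a spanning-tree edge count, and finishes with a case disjunction (either $c\ge\sqrt{k}$, or some part contains $\ge\sqrt{k}$ twin classes). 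You instead pass through Lemma~\ref{lemma:cover_equivalence} to an editing sequence of length $cost(C)$, note that each deletion or split creates at most one new connected component while additions create none, and conclude that the at most $cost(C)+1$ components of the resulting biclique union dominate the number of $A$-meeting parts; you then combine the two bounds multiplicatively rather than by a maximum. Your route is shorter, works for an \emph{arbitrary} cover (no twin-adaptation or minimality needed), and your closing remark is right: summing $r_X-1$ over the $A$-meeting parts instead of multiplying yields the linear bound $cost(C)\ge(k-1)/2$, which is strictly stronger than $\sqrt{k}-1$ and would improve the constants in Lemma~\ref{lemma:critical_biclique_size} and hence the kernel size. The one point worth making explicit in a write-up is that your component argument relies not only on the statement of Lemma~\ref{lemma:cover_equivalence} but on the structure of the biclique union its proof constructs (one biclique per part, on $A\cap X$ together with copies of the vertices of $B\cap X$, so distinct $A$-meeting parts occupy disjoint nonempty sets of components); that construction is exactly what the cited proof provides, so the reliance is legitimate but should be stated.
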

\begin{proof}
    By Lemma~\ref{lemma:adapted_cover}, consider an $A$-partitioning cover $C$ of $G$ of minimum cost which is adapted to the twin classes of $G$ .
    We denote by $c$ the number of subsets of $C$.


    Construct a graph $G'$ with one vertex for every subset of $C$ and one vertex for every vertex of $B$.
    Thus $G'$ has $n' = c + |B|$ vertices.
    Connect in $G'$ the vertex of a subset $X$ of $C$ to a vertex $b$ if $b \in N[X]$.
    As $G$ is connected, then $G'$ is also connected.
    Thus the number $m'$ of edges of $G'$ is at least $ n'-1 = c+ |B|-1$.

    Let us now prove that $cost(C)  \geq \sum_{b \in B} (d(b)-1)$ where $d(b)$ denotes the degree of $b$ in $G'$.
    For any vertex $b$ of $B$, we denote by $del(b)$ the number of edges incident to $b$ that will be deleted: it is the number of vertices $a \in A$ such that $ab \in E$ and such that the unique subset $X$ of $C$ containing $a$ does not contain $b$.
    
    Let $b$ be a $B$ vertex.
    We denote by $s$ the cost of $b$.
    Then $b$ is in $s+1$ subsets of $C$.
    Let $X_1, \ldots, X_d$ be the subsets of $C$ such that $b \in \cap_{i=1}^d N[X_i]$.
    Remark that $s+1 \leq d$. 
    If $s = d-1$, then $cost(b) + del(b) \geq d-1$.
    Otherwise $s \leq d-2$.
    There exists at least $d-(s+1)$ subsets of $X_1, \ldots, X_d$ in which $b$ is not contained.
    Therefore, for each subset $X \in C$ such that $b \not\in X$ but $b \in N[X]$, there exists a vertex $a \in X$ such that $b \in N(a)$.
    Thus $cost(b) + del(b) \geq s+ (d - s -1) = d-1$.

    Remark that $cost(C) =  \sum_{b \in B} cost(b) + \sum_{a \in A} cost(a) \geq \sum_{b \in B} (cost(b) + del(b))$.
    Because of the previous paragraph, we conclude that we have the inequality $cost(C) \geq \sum_{b\in B} (d(b)-1)$.

    As $G'$ is bipartite, the number $m'$ of edges of $G'$ equals $\sum_{b \in B} d(b)$.
    Thus,  $cost(C) \geq  m' - |B|$.
    As $m' \geq c + |B| -1$, then the cost of $C$ is at least $c-1$.

    Let us now make a disjunction of cases.
    If $c \geq \sqrt{k}$, then $cost(C) \geq \sqrt{k}-1$ because of the previous inequality.
    
    Otherwise, assume $c \leq \sqrt{k}$.
    As $C$ is a twin adapted $A$-partitioning cover of $G$, $k = \sum_{i=1}^c ATC(X_i)$ where $ATC(X_i)$ is the number of $A$ twin classes contained in $X_i$.
    Thus there exists a subset $X$ of $C$ containing at least $\sqrt{k}$ $A$ twin classes because otherwise $ATC(X_i) < \sqrt{k}$ for every $i$ and thus $\sum_{i=1}^c ATC(X_i) < c \sqrt{k} \leq k$, a contradiction.

    Let $x$ be the number of $A$ twin classes contained in $X$.
    Let us show that $\sum_{a \in A \cap X} cost(a) \geq x-1$.
    
    If every $A$ vertex of $X$ has a non zero cost, then the sum of the costs of the $A$ vertices of $X$ is at least $|X| \geq x$.
    Otherwise, suppose that there exists a vertex $a \in A$ in $X$ of cost 0.
    Then $N(a)$ must be in $X$.
    For every $a' \in A$ such that $a'$ is a vertex of $X$ of a different twin class than $a$, $N(a') \not= N(a)$.
    Let $b \in N(a') \Delta N(a)$.
    If $b \in N(a') \setminus N(a)$ then $b \not\in X$, because otherwise the cost of $a$ would be non zero.
    Therefore the edge $a'b$ will be deleted, and thus the cost of $a'$ is at least 1.
    If $b \in N(a) \setminus N(a')$.
    As $b \in X$, the edge $ab$ will be added and then the cost of $a'$ is at least 1.
    We deduce that the sum of the costs of the $A$ vertices in $X$ is at least $x-1$, because except for the twin class of $a$, every other vertex is of cost at least 1.

    Hence, in any case, the sum of the costs of the $A$ vertices in $X$ is at least $x-1$.
    As $x \geq \sqrt{k}$, we deduce that the cost of $C$ is at least $\sqrt{k} -1$.
    By minimality of $C$, we conclude that every $A$ partitioning cover has a cost of at least $\sqrt{k}-1$.
\end{proof}

\begin{lemma}[Reduction]
    \label{lemma:types_size}
    Let $G$ be a bipartite graph given along with an integer $k$.
    Consider the twin classes $T_1, \ldots, T_p$ of $G$.
    For every $i \in [p]$, we consider a subset $T_i'$ of $T_i$ of size $k+1$ if $|T_i| \geq k+1$, otherwise we set $T_i'$ to $T_i$.
   Then $G$ has an $A$-partitioning cover of cost at most $k$ if and only $G[\cup T_i']$ has an $A$-partitioning cover of cost at most $k$.
    

\end{lemma}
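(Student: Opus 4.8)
\emph{Setup and easy direction.} Write $G'=G[\bigcup_i T_i']$, and note that whenever $|T_i|\ge k+1$ we have $|T_i'|=k+1$, while every twin class with $|T_i|\le k$ is left untouched, so only classes with $|T_i|>k+1$ actually lose vertices. The plan is to convert an $A$-partitioning cover of either graph into one of the other with no larger cost. For the direction ``$G\Rightarrow G'$'' I would simply restrict: given an $A$-partitioning cover $C$ of $G$ with $cost(C)\le k$, set $C'=\{\,X\cap V(G') : X\in C\,\}$ (discarding empty sets). Each $a\in A\cap V(G')$ lies in a unique member of $C$, hence in a unique member of $C'$, so $C'$ is an $A$-partitioning cover of $G'$. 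For $a\in A\cap V(G')$, the edges incident to $a$ that $C'$ ``edits'' form a subset of those edited by $C$ (everything is intersected with $V(G')$ and $N_{G'}(a)=N_G(a)\cap V(G')$), so $cost_{G'}(a)\le cost_G(a)$; and for $b\in B\cap V(G')$, the number of members of $C'$ containing $b$ is at most the number of members of $C$ containing $b$ (distinct parts may coincide after restriction), so $cost_{G'}(b)\le cost_G(b)$. Summing gives $cost(C')\le cost(C)\le k$.

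\emph{The substantive direction.} For ``$G'\Rightarrow G$'' I would start from an $A$-partitioning cover of $G'$ of cost at most $k$ and, by Lemma~\ref{lemma:adapted_cover}, replace it by a minimum-cost one $C'$ adapted to the twin classes of $G'$. Those twin classes are exactly $T_1',\dots,T_p'$: deleting vertices from a twin class while keeping at least one representative cannot merge two classes, since the symmetric difference of two distinct neighbourhoods is a nonempty union of twin classes on the opposite side, each of which retains a representative in $V(G')$. Now examine any twin class with $|T_i'|=k+1$. If $T_i\subseteq A$, twin-adaptedness places all $k+1$ vertices of $T_i'$ in a single part $X_i\in C'$, and they share a common cost (depending only on $N_{G'}(a)$ and $X_i$); since $(k+1)\,cost(a)\le cost(C')\le k$, that cost is $0$, i.e.\ $X_i\cap B=N_{G'}(a)$ for $a\in T_i'$. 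If $T_i\subseteq B$, $T_i'$ lies in exactly $r_i$ parts of $C'$, each of its vertices having cost $r_i-1$, and $(k+1)(r_i-1)\le k$ forces $r_i=1$. In both cases there is a \emph{unique} part of $C'$ containing $T_i'$.

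\emph{Re-inserting the removed vertices.} I then form $C$ from $C'$ by adding, for each twin class with $|T_i|>k+1$, all of $T_i\setminus T_i'$ to the unique part of $C'$ containing $T_i'$. The restriction of $C$ to $A$ is still a partition and $C$ covers $V(G)$, so $C$ is an $A$-partitioning cover of $G$; it remains to check $cost(C)=cost(C')$. A re-added $A$-vertex $a$ is a twin of some $a'\in T_i'$ with $X_i\cap B=N_{G'}(a')$, and combining this with the fact that $N_G(a)$ is a union of $B$-twin classes yields $(\text{extended part})\cap B=N_G(a)$, so $cost_G(a)=0$; a re-added $B$-vertex lies in exactly $r_i=1$ parts of $C$, hence also has cost $0$. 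For a vertex already in $V(G')$, expanding its cost as a sum over $B$-twin classes $T_j$ shows its cost can change only through some $T_j$ with $|T_j|>k+1$ sitting ``on the wrong side'' of its part; but such a $T_j$ alone contributes $k+1$ to that vertex's cost in $C'$, contradicting $cost(C')\le k$. Hence the costs are unchanged and $cost(C)\le k$, which together with the easy direction proves the equivalence.

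\emph{Main obstacle.} The crux is the cost bookkeeping of the second direction: one must rule out any blow-up when a (potentially very large) twin class is re-inserted into a single part, and the argument works precisely because every configuration that would cause such a blow-up already costs more than $k$ in $G'$ and is therefore excluded. Tracking, for each large twin class, which part it is or is not contained in, and the equivalence ``$T_j$ is on a given side of a part in $G$ $\iff$ $T_j'$ is on that side of the part in $G'$'', is where the care is needed; the rest is routine.
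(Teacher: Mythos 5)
Your proof is correct and follows essentially the same route as the paper: restrict the cover for the easy direction, and for the converse take a twin-adapted minimum cover of $G'$ (via Lemma~\ref{lemma:adapted_cover}), use the $k+1$ identical copies to force cost $0$ on reduced classes, and extend each part to the full twin classes. You are in fact somewhat more careful than the paper in two spots it glosses over --- checking that the twin classes of $G[\cup T_i']$ are exactly the $T_i'$, and that re-inserting a large $B$-class cannot raise the cost of vertices already present in $G'$.
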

\begin{proof}
    We denote by $G'$ the graph $G[\cup T_i']$.
    
    Assume that there exists an $A$-partitioning cover $C$ of the vertices of $G$ of cost at most $k$.
    We consider the cover $C'$ induced by the cover $C$ on the vertices of $G'$ (which is an induced subgraph of $G$).
    The costs of the vertices can only be decreased by deleting vertices.
    Therefore the cover of $C'$ must have a smaller cost than $C$.

    Assume that $G'$ has an $A$-partitioning cover of cost at most $k$.
    Thus there exists an $A$-partitioning cover $C'$ of $G'$ of cost at most $k$ which is adapted to the twin classes.
    We define an $A$-partitioning cover $C$ of $G$ by extending the subsets of $C'$ to the twin classes of $G$.

We shall prove that $C$ and $C'$ have the same cost.
    Let $x$ be a vertex of a twin class which has been reduced.
    Then $x$ must have $k$ twins.
    As the cover is adapted to the twin classes, then the twins of $x$ are contained in the same subsets.
    Therefore they must all have the same cost.
    It is therefore impossible for the cost of $x$ to be non zero, otherwise the cost of $x$ and all of its twins would be larger than $k+1$.
    This implies that $cost(x) = 0$ in $C'$.
    Therefore the cost of $x$ is also $0$ in $C$.
    Hence $C$ and $C'$ have the same cost.
\end{proof}

\begin{lemma}
    \label{lemma:critical_biclique_size}
    Let $G=(A,B,E)$ be a connected bipartite graph such that all twin classes are of size at most $k$ and having a twin adapted $A$-partitioning cover $C$ of cost at most $k$.
    Then there are at most $4k^4$ twin classes in $B$.
\end{lemma}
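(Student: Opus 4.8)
The plan is to use the given twin-adapted cover $C$ of cost at most $k$ to classify the twin classes of $B$ according to how $C$ places them, and to bound the number of classes of each type separately. First I would record three facts about $C$. (i) The total number of edge editions prescribed by $C$ (an edge $ab$ is deleted when the unique subset of $C$ containing $a$ does not contain $b$, and an edge $ab$ is added when that subset contains $b$ but $ab\notin E$) equals $\sum_{a\in A}cost(a)$, and the total number of prescribed splittings equals $\sum_{b\in B}cost(b)$; since these add up to $cost(C)\le k$ and all terms are non-negative, each of the two quantities is at most $k$. (ii) If a twin class $T\subseteq B$ is contained in a single subset $X$ of $C$ — by twin-adaptedness $T$ then lies entirely inside $X$ — the edge editions incident to a vertex $b\in T$ are exactly the pairs $ab$ with $a\in(A\cap X)\setminus N(b)$ and the pairs $ab$ with $a\in N(b)\setminus(A\cap X)$, so their number is $|N(b)\,\Delta\,(A\cap X)|$; this is constant over $T$ because $N(b)=N(T)$ there. (iii) Repeating the computation in the proof of Lemma~\ref{lemma:number_A_twin_classes} (the inequality $cost(C)\ge |C|-1$ there uses only connectedness and the cover structure, not minimality or twin-adaptedness), we obtain $c:=|C|\le k+1$.

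Next I would split the twin classes of $B$ into three families; these are exhaustive and pairwise disjoint because every twin class lies in at least one subset of $C$. A \emph{split class} is contained in at least two subsets of $C$: by twin-adaptedness all its vertices lie in the same collection of $j\ge 2$ subsets, so such a class $T$ contributes $|T|(j-1)\ge 1$ to $\sum_{b\in B}cost(b)\le k$, whence there are at most $k$ split classes. An \emph{edited class} is contained in a single subset $X$ and has $N(T)\neq A\cap X$: by (ii) each of its vertices is incident to at least one edge edition, so the total size of all edited classes is at most the number of edge editions, which by (i) is at most $k$, whence there are at most $k$ edited classes. An \emph{exact class} is contained in a single subset $X$ and has $N(T)=A\cap X$: two distinct exact classes cannot share their subset, for their common neighborhood would force them to coincide and violate maximality of twin classes; and since the restriction of $C$ to $A$ is a partition, a non-empty set of the form $A\cap X$ determines $X$ uniquely, while $N(T)=A\cap X$ is non-empty as $G$ is connected (with at least three vertices); hence the exact classes inject into the subsets of $C$, and there are at most $c\le k+1$ of them. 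Adding the three bounds, $B$ has at most $k+k+(k+1)=3k+1$ twin classes, and $3k+1\le 4k^4$ for every $k\ge 1$.

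The difficulty here is not conceptual but lies in the bookkeeping: I must verify precisely the two identities underlying (i), namely that edge editions and splittings are counted by $\sum_{a\in A}cost(a)$ and $\sum_{b\in B}cost(b)$ respectively, and the per-vertex formula $|N(b)\,\Delta\,(A\cap X)|$ of (ii), since every later step reduces to these. A secondary subtlety is that a subset of $C$ may meet $A$ trivially; such a subset can still contain vertices of $B$, but a $B$-vertex contained only in it would be isolated in $G$, which connectedness forbids, so such a subset carries no exact class and the injection in the last step is unaffected. I should also note that this argument in fact gives the much stronger linear bound $3k+1$ and never uses the hypothesis that twin classes have size at most $k$; the statement is phrased with the weaker $4k^4$, which is all that is needed in what follows.
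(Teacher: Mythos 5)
Your proof is correct, but it takes a genuinely different route from the paper's. The paper argues locally: it fixes a twin class $T$ of $A$, pairs up $B$-twin classes meeting $N(T)$ and charges each of $k+1$ disjoint pairs with at least one operation to conclude that at most $2k+1$ such classes exist, and then multiplies by the bound on the number of $A$-twin classes from Lemma~\ref{lemma:number_A_twin_classes} and the size bound $k$ on twin classes, yielding the quartic bound. You instead classify the $B$-twin classes globally by how the cover $C$ treats them --- split (in $\ge 2$ subsets), edited (in one subset $X$ with $N(T)\neq A\cap X$), exact ($N(T)=A\cap X$) --- and charge the first family to the splitting budget $\sum_b cost(b)\le k$, the second to the edge-edition budget $\sum_a cost(a)\le k$ via the identity $|N(b)\,\Delta\,(A\cap X)|$, and the third injectively to the subsets of $C$, of which there are at most $k+1$ by the inequality $cost(C)\ge |C|-1$ that you correctly extract from the proof of Lemma~\ref{lemma:number_A_twin_classes} (that inequality indeed uses only connectedness and the cover axioms, not minimality). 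All three charges check out: each edited edge has a unique $B$-endpoint, so distinct vertices of edited classes are charged to distinct editions, and two exact classes sharing a subset would have equal neighborhoods, contradicting maximality of twin classes. What your approach buys is substantial: a linear bound $3k+1$ in place of $4k^4$, no use of the hypothesis that twin classes have size at most $k$, and consequently a smaller kernel downstream (roughly $O(k^3)$ vertices instead of $O(k^5)$). One small imprecision: a $B$-vertex whose only subset has empty $A$-part is not necessarily isolated in $G$ --- it is isolated only after the editing --- but the conclusion you need (no \emph{exact} class lives in such a subset, since its neighborhood would be empty) is correct, and in fact the injectivity of exact classes into subsets already follows from twin-class maximality alone, so this aside is dispensable.
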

\begin{proof}
    Consider a twin class $T$ of $A$, and a vertex $a \in T$.
    Then $|T| \leq k$.
    Let $t$ be the number of twin classes that have a vertex in $N(T)$.
    Let us prove that $t \leq 2k+1$.
    
    Assume that $t \geq 2k+2$.
    Then there exists $B$ vertices $y_1, y_2, \ldots, y_{2k+2}$ in $N(T)$ of different twin classes.
    Consider two different indices $i$ and $j$.
    Let us prove that there exists a vertex $a' \in A$ such that the edge $a' y_i$ or $a' y_j$ is either deleted or added.
    
    As $y_i$ and $y_j$ have a different neighborhood, there exists $a' \in A$ such that $a'$ is adjacent to $y_i$ but not to $y_j$ (we can swap $i$ and $j$ if $a'$ is not adjacent to $y_i$ but to $y_j$).
    The vertex $a'$ is not in $T$ (otherwise $a'$ would be adjacent to $y_i$ and $y_j$).
    Thus $a' \not= a$.

    If $a$ and $a'$ are in a same subset $Y$ of $C$, and
$y_j$ is also in this set, then the edge $a' y_j$ will be added.
    Otherwise $y_j \not\in Y$ and in this case the edge $ay_j$ will be deleted.
    Otherwise $a$ and $a'$ are in different subsets $X$ and $X'$ of $C$.
    If $y_i$ is not in both $X$ and $X'$, then the edge $a y_i$ or $ay_j$ is deleted.
    Otherwise $y_i$ is in both $X$ and $X'$.
    Thus the cost of $y_i$ is at least 1.

    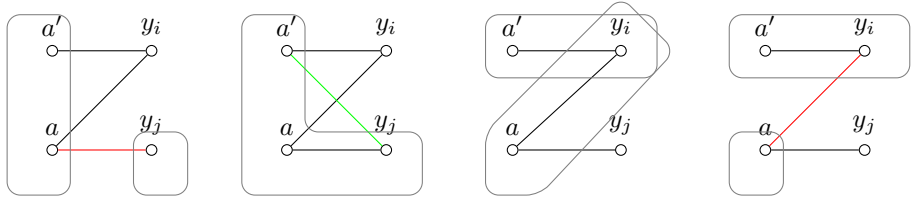
\begin{figure}[h]
        \centering
\begin{tikzpicture}
    [
        yscale=-1,
        node_style/.style={circle, draw, inner sep=0.05cm} 
    ]
    \def\fscale{12} 
	 \node[node_style, label={above:$a'$}] (0) at ($\fscale*(-0.45, -0.05)$) {};
	 \node[node_style, label={above:$y_i$}] (1) at ($\fscale*(-0.34, -0.05)$) {};
	 \node[node_style, label={above:$a$}] (2) at ($\fscale*(-0.45, 0.06)$) {};
	 \node[node_style, label={above:$y_j$}] (3) at ($\fscale*(-0.34, 0.06)$) {};
	 \node[node_style, label={above:$a'$}] (4) at ($\fscale*(0.06, -0.05)$) {};
	 \node[node_style, label={above:$y_i$}] (5) at ($\fscale*(0.18, -0.05)$) {};
	 \node[node_style, label={above:$a$}] (6) at ($\fscale*(0.06, 0.06)$) {};
	 \node[node_style, label={above:$y_j$}] (7) at ($\fscale*(0.18, 0.06)$) {};
	 \coordinate (8) at ($\fscale*(-0.5, -0.09)$) {};
     
	\coordinate (9) at ($\fscale*(-0.5, 0.11)$) {};
	\coordinate (10) at ($\fscale*(-0.43, 0.11)$) {};
	\coordinate (11) at ($\fscale*(-0.43, -0.09)$) {};
	\coordinate (12) at ($\fscale*(0.03, -0.09)$) {};
	\coordinate (13) at ($\fscale*(0.03, -0.02)$) {};
	\coordinate (14) at ($\fscale*(0.22, -0.02)$) {};
	\coordinate (15) at ($\fscale*(0.22, -0.09)$) {};
	\coordinate (16) at ($\fscale*(0.03, 0.04)$) {};
	\coordinate (17) at ($\fscale*(0.03, 0.11)$) {};
	\coordinate (18) at ($\fscale*(0.09, 0.11)$) {};
	\coordinate (20) at ($\fscale*(-0.3, 0.04)$) {};
	\coordinate (21) at ($\fscale*(-0.3, 0.11)$) {};
	\coordinate (22) at ($\fscale*(-0.36, 0.04)$) {};
	\coordinate (23) at ($\fscale*(-0.36, 0.11)$) {};
	 \node[node_style, label={above:$a'$}] (24) at ($\fscale*(-0.19, -0.05)$) {};
	 \node[node_style, label={above:$y_i$}] (25) at ($\fscale*(-0.08, -0.05)$) {};
	 \node[node_style, label={above:$a$}] (26) at ($\fscale*(-0.19, 0.06)$) {};
	 \node[node_style, label={above:$y_j$}] (27) at ($\fscale*(-0.08, 0.06)$) {};
	\coordinate (28) at ($\fscale*(-0.24, -0.09)$) {};
	\coordinate (29) at ($\fscale*(-0.24, 0.11)$) {};
	\coordinate (31) at ($\fscale*(-0.17, -0.09)$) {};
	\coordinate (32) at ($\fscale*(-0.04, 0.04)$) {};
	\coordinate (33) at ($\fscale*(-0.04, 0.11)$) {};
	\coordinate (34) at ($\fscale*(-0.17, 0.04)$) {};
	 \node[node_style, label={above:$a'$}] (30) at ($\fscale*(0.34, -0.05)$) {};
	 \node[node_style, label={above:$y_i$}] (35) at ($\fscale*(0.45, -0.05)$) {};
	 \node[node_style, label={above:$a$}] (36) at ($\fscale*(0.34, 0.06)$) {};
	 \node[node_style, label={above:$y_j$}] (37) at ($\fscale*(0.45, 0.06)$) {};
	\coordinate (38) at ($\fscale*(0.3, -0.09)$) {};
	\coordinate (39) at ($\fscale*(0.3, -0.02)$) {};
	\coordinate (40) at ($\fscale*(0.5, -0.02)$) {};
	\coordinate (41) at ($\fscale*(0.5, -0.09)$) {};
	\coordinate (42) at ($\fscale*(0.3, 0.04)$) {};
	\coordinate (43) at ($\fscale*(0.3, 0.11)$) {};
	\coordinate (44) at ($\fscale*(0.36, 0.11)$) {};
	\coordinate (45) at ($\fscale*(0.36, 0.04)$) {};
	\coordinate (19) at ($\fscale*(0.18, -0.11)$) {};
	\coordinate (46) at ($\fscale*(0.24, -0.05)$) {};

	 \draw (0) to  (1);
	 \draw (2) to  (1);
	 \draw[color=red] (2) to  (3);
	 \draw (4) to  (5);
	 \draw (6) to  (7);

\draw (24) to  (25);
\draw (26) to  (27);
\draw (26) to  (34);
\draw (34) to  (25);
\draw (30) to  (35);
\draw (36) to  (37);
\draw (6) to  (5);

\draw[color=green] (24) to  (27);

\draw[color=red] (36) to  (45);
\draw[color=red] (45) to  (35);

\draw[rounded corners=5pt, color=gray] (10) -- (11) -- (8) -- (9) -- cycle;
\draw[rounded corners=5pt, color=gray] (12) -- (13) -- (14) -- (15) -- cycle;
\draw[rounded corners=5pt, color=gray] (20) -- (22) -- (23) -- (21) -- cycle;

\draw[rounded corners=5pt, color=gray] (16) -- (17) -- (18)  -- (46) -- (19) --  cycle;

\draw[rounded corners=5pt, color=gray] (38) -- (39) -- (40)  -- (41) --  cycle;

\draw[rounded corners=5pt, color=gray] (32) -- (34) -- (31)  -- (28) -- (29) -- (33) --  cycle;

\draw[rounded corners=5pt, color=gray] (42) -- (43) -- (44)  -- (45) --  cycle;

\end{tikzpicture}

\vspace{10pt}
        \caption{On the left, $a$ and $a'$ are in the same set of the cover and $y_j$ is not in the same set.
        In the second case, $a$, $a'$ and $y_j$ are in the same set, therefore $a'y_j$ will be added.
        In the third figure, $a$ and $a'$ are in different sets both containing $y_i$; therefore $y_i$ is split at least once.
        In the fourth figure, $a$ and $a'$ are not in the same set and $y_i$ is not in the set of $a$, therefore the edge $ay_i$ is added. }
        \label{fig:b_twin_classes_2}
    \end{figure}
    
    We deduce that in every case, either an edge adjacent to $y_i$ or $y_j$ is deleted or added, or $y_i$ has a cost of at least $1$.
    By summing up all these cases for every pair $y_{2i-1} y_{2i}$ for every $i \in \{1, \ldots, k+1\}$, we conclude that the cost of $C$ is at least $k+1$.
    A contradiction.
    Hence $t \leq 2k+1$.
    
    By our hypothesis, the twin classes are of size at most $k$.
    Therefore $N(T)$ is of size at most $k(2k+1) \leq 2k^2$.
    Because of Lemma~\ref{lemma:number_A_twin_classes}, there are at most $(k+2)^2-1$ twin classes in $A$ (otherwise the cost of $C$ would be at least $k+1$).
    As the graph is connected, each twin class in $B$ is connected to at least one twin class in $A$.
    Thus there are at most $((k+2)^2-1 ) \cdot 2k^2 \leq 4k^4$ twin classes in $B$.
\end{proof}

\begin{theorem}
    \textsc{Bicluster Editing with One-sided Vertex Splitting} has an $O(k^5)$ kernel.
\end{theorem}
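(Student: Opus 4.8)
The plan is to work entirely in the $A$-partitioning cover formulation, relying on Lemma~\ref{lemma:cover_equivalence} to translate at the end, and to chain the three structural lemmas already established. First I would decompose $G$ into connected components. An $A$-partitioning cover of $G$ can always be refined so that no subset meets two components, without increasing any vertex cost, so the minimum cover cost of $G$ equals the sum of the minimum cover costs of its components. Hence every component that is already a biclique contributes cost $0$ and can be discarded. A surviving component is connected and not a biclique, so it is not a disjoint union of bicliques and needs at least one operation; therefore in a yes-instance at most $k$ components survive, say $C_1,\dots,C_q$ with minimum costs $k_1,\dots,k_q\ge 1$ and $\sum_i k_i\le k$. (If more than $k$ non-biclique components remain, or any other trivial obstruction appears, the algorithm simply outputs a fixed small no-instance.)

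Next I would apply the reduction of Lemma~\ref{lemma:types_size}, shrinking every twin class to $\min(|T|,k+1)$ vertices. This preserves the answer, and since two twin classes of a bipartite graph are either completely joined or completely non-adjacent it also preserves the component structure, so afterwards every twin class has size at most $k+1$ and the costs $k_i$ are unchanged. Now assume the reduced instance is a yes-instance. Each $C_i$ admits a twin-adapted $A$-partitioning cover of cost $k_i\le k$ (Lemma~\ref{lemma:adapted_cover}), so by Lemma~\ref{lemma:number_A_twin_classes} it has fewer than $(k_i+2)^2$ twin classes in $A$, whence $|A\cap C_i|\le\big((k_i+2)^2-1\big)(k+1)=O(k_i^{2}k)$. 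Re-running the argument of Lemma~\ref{lemma:critical_biclique_size} with the twin-class-size bound $k+1$ and the cost bound $k_i$ (rather than the global $k$) shows that each $A$-twin class has a neighbourhood of at most $(2k_i+1)(k+1)$ vertices, hence meets at most that many $B$-twin classes; since $C_i$ is connected this yields $O(k_i^{3}k)$ twin classes in $B$ and $|B\cap C_i|=O(k_i^{3}k^{2})$.

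Summing over the components and using the super-additivity bounds $\sum_i k_i^{2}\le(\sum_i k_i)^{2}\le k^{2}$ and $\sum_i k_i^{3}\le(\sum_i k_i)^{3}\le k^{3}$ gives $|A|=O(k^{3})$ and $|B|=O(k^{5})$, so every yes-instance, once the component cleanup and the twin-class reduction have been applied, has $O(k^{5})$ vertices. The kernelization algorithm therefore performs those two (polynomial-time) reductions and then checks whether the resulting graph still has more than $c\,k^{5}$ vertices, where $c$ is the constant hidden in the bound above; if so the instance is necessarily a no-instance and the algorithm returns a fixed trivial no-instance, otherwise it returns the reduced instance, which is equivalent by Lemmas~\ref{lemma:cover_equivalence} and~\ref{lemma:types_size}. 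This is an $O(k^{5})$-vertex kernel for \BCEOVS{}.

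The main obstacle is the degree bookkeeping: invoking Lemmas~\ref{lemma:number_A_twin_classes} and~\ref{lemma:critical_biclique_size} as black boxes with the global bound $k$ on each of up to $k$ surviving components only produces an $O(k^{6})$ kernel, so it is essential to carry the per-component costs $k_i$ through the counting arguments and then exploit $\sum_i k_i^{t}\le(\sum_i k_i)^{t}$. A secondary point that needs care is checking that the hypotheses of those two lemmas—connectedness, the existence of a twin-adapted cover, and the relevant cost bound—genuinely hold on each surviving component of a yes-instance, which is exactly where the additivity of cover cost over components and Lemma~\ref{lemma:adapted_cover} are used, and correspondingly that no-instances are disposed of by the final size check rather than by the structural lemmas.
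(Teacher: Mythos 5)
Your proof is correct and follows the same strategy as the paper's: pass to the $A$-partitioning cover formulation via Lemma~\ref{lemma:cover_equivalence}, shrink twin classes via Lemma~\ref{lemma:types_size}, bound the number of twin classes in $A$ and $B$ via Lemmas~\ref{lemma:number_A_twin_classes} and~\ref{lemma:critical_biclique_size}, and reject when the counts exceed those bounds. The genuine difference is your explicit decomposition into connected components: those two counting lemmas are stated only for \emph{connected} bipartite graphs, and the paper's proof applies them to the whole input without comment, whereas you discard biclique components (cost $0$), observe that at most $k$ non-biclique components can survive in a yes-instance, apply the counting lemmas per component with budget $k_i$, and recombine via $\sum_i k_i^t \le (\sum_i k_i)^t$. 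This buys two things: it makes the argument airtight for disconnected inputs (as you note, invoking the lemmas as black boxes with the global budget $k$ on each of up to $k$ components would only give $O(k^6)$), and it yields the slightly sharper intermediate bound $|A| = O(k^3)$, though the final kernel size is the same $O(k^5)$. The only small points to watch are that after discarding isolated vertices and other biclique components no twin class spans two components (so the twin-class reduction respects the decomposition), and that the per-component minimum costs can only decrease under the reduction of Lemma~\ref{lemma:types_size}, which only helps the bounds; both are easy to verify.
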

\begin{proof}
    Computing the twin classes can be done in $O(n^2)$ time.
    If there are more than $(k+2)^2-1$ twin classes in $A$ or more than $4(k+1)^4$ twin classes in $B$, then we set $G'$ to a path of length $4(k+1)$.
    This graph needs at least $k+1$ editing operations to be turned into a disjoint union of bicliques because there are $k+1$ disjoint geodesics of length $3$ each.
    Now because of Lemmas~\ref{lemma:number_A_twin_classes}, \ref{lemma:types_size} and \ref{lemma:critical_biclique_size}, we must have a no-instance.

    Otherwise there are at most $(k+2)^2-1$ twin classes in $A$ and $4(k+1)^4$ twin classes in $B$.
    We set $G'$ to be the graph constructed by Lemma~\ref{lemma:types_size}.
    Then, in this case, $G'$ has a total of at most $2k^2 + 4(k+1)^4$ twin classes which are all of size at most $k+1$ (each).
    Therefore $G'$ is of size at most $7k^5$, and $G'$ has a cover of cost at most $k$ if and only if $G$ has a cover of cost at most $k$.

    We conclude that, in all cases, we have constructed a reduced problem instance (the graph $G'$) of size at most $7k^5$, in polynomial time, such that $G'$ has a cover of cost at most $k$ if and only if $G$ has a cover of cost at most $k$.
\end{proof}


\section{Hardness of approximation}

Our objective in this section is to reduce MAX 3-SAT(4) to \BCEVS{} and \BCEOVS{}. 
MAX 3-SAT(4) is a variant of MAX 3-SAT where each variable appears at most four times in a formula $\phi$.

We also add the following constraint: when a variable $v$ appears exactly two times positively and two times negatively, we suppose that the list of clauses in which $v$ appears, $C(v)_1, C(v)_2,C(v)_3,C(v)_4$, is ordered so that $v$ appears positively in $C(v)_1$ and $C(v)_3$ and negatively in $C(v)_2$ and $C(v)_4$.  
This constraint is added to ensure that each unsatisfied clause in $\phi$ causes an additional split in the construction.
In fact, we can observe that if the formula $\phi$ cannot be satisfied, then we can use an ``extra'' split in each clause gadget to obtain a solution. 
However, the inverse does not necessarily hold if there is a variable $v$ that occurs two times positively and two times negatively.
Indeed by using $8+1$ splits in the variable cycle, we may be able to satisfy the four clauses where $v$ occurs.

\begin{definition}[Linear reduction \cite{PapadimitriouY91}]
    Let $A$ and $B$ be optimization problems having each one its own cost function defined on solutions of their instances.
    We say that $A$ has a linear reduction to $B$ if there exists two polynomial time algorithms $f$ and $g$ and two positive numbers $\alpha$ and $\beta$ such that for any instance $I$ of $A$
    \begin{itemize}
        \item $f(I)$ is an instance of $B$.
        \item $OPT_B(f(I)) \leq \alpha OPT_A(I)$ .
        \item For any solution $S'$ of $f(I)$, $g(S')$ is a solution of $I$ and   $|cost_A(g(S'))- OPT_A(I)| \leq \beta |cost_B(S') - OPT_B(f(I))|$.
    \end{itemize}
\end{definition}

\begin{theorem}\label{th:inapprox}
The problems \BCEVS{} and \BCEOVS{} are {\sc APX}-hard.
\end{theorem}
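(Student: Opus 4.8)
The plan is to exhibit a linear reduction, in the sense of the definition above, from MAX 3-SAT(4) --- which is \textsc{APX}-hard \cite{PapadimitriouY91} --- to each of \BCEVS{} and \BCEOVS{}. The reduction $f$ builds on Construction~\ref{construction:reduction}, incorporating the ordering convention on variables occurring twice positively and twice negatively discussed above; given a 3-CNF formula $\phi$ with $m$ clauses (we may assume every variable occurs between two and four times, occurrences of order at most one being removable by a standard preprocessing that shifts $OPT$ and the number of clauses by the same amount), $f(\phi)$ is the resulting bipartite graph $G_\phi$, and $OPT_B(f(\phi))$ is $bcevs(G_\phi)$ (resp.\ $bceovs(G_\phi,A)$). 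The cost function of MAX 3-SAT(4) is the number of satisfied clauses, so $OPT_A(\phi) = m - u^*(\phi)$, where $u^*(\phi)$ is the minimum number of clauses left unsatisfied over all truth assignments.

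The technical heart is the identity $bcevs(G_\phi) = bceovs(G_\phi,A) = 8m + u^*(\phi)$ together with an effective version of its lower bound. For the upper bound, take an assignment realizing $u^*(\phi)$ and apply the canonical solution used in the satisfiable direction of the proof of Theorem~\ref{theorem:npc} (two deletions per clause, $2d(v)$ per variable cycle, totalling $8m$ when every clause is satisfied), then repair each of the $u^*(\phi)$ unsatisfied clauses with one extra operation inside its gadget --- the ordering convention is exactly what guarantees that one such operation always suffices, and is what prevents the otherwise cheaper alternative of satisfying all four clauses of a variable by spending $8+1$ operations inside its cycle. For the lower bound, I would prove the stronger statement that from \emph{any} sequence $S'$ of edge editions and (one-sided) vertex splittings turning $G_\phi$ into a disjoint union of bicliques, of length $c = cost_B(S')$, one can compute in polynomial time a truth assignment $\tau = g(S')$ that leaves at most $c - 8m$ clauses unsatisfied; applied to an optimal $S'$ this yields $u^*(\phi) \le OPT_B(f(\phi)) - 8m$, hence $bcevs(G_\phi) \ge 8m + u^*(\phi)$. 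The procedure $g$ is a quantitative refinement of the lower-bound analysis in the proof of Theorem~\ref{theorem:npc}: for each variable $v$ one isolates the operations $Op(v)$ performed inside the cycle of $v$, rounds $Op(v)$ to the nearest of its three canonical cuts (Lemma~\ref{lemma:cycle}) and reads off the value of $v$ (untouched variables get an arbitrary value); a charging argument --- mirroring the inequality $c \ge 8m + a_0 + a_1$ of that proof, and using that the sets $Op(v)$ and $Op(c)$ are pairwise disjoint --- then shows that every clause left unsatisfied by $\tau$ absorbs at least one unit of the excess $c - 8m$, each such unit being a clause gadget using fewer than two operations, a gadget using two operations in a way not corresponding to selecting a satisfying literal, or a variable cycle deviating from its canonical cut.

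Granting this, the linear reduction follows at once. A uniformly random assignment satisfies at least $m/2$ clauses, so $OPT_A(\phi) \ge m/2$ and therefore $OPT_B(f(\phi)) = 8m + u^*(\phi) \le 9m \le 18\, OPT_A(\phi)$, giving the first condition with $\alpha = 18$. For the second, $g(S')$ is always a feasible assignment and, writing $\tau = g(S')$ and $c = cost_B(S')$,
\begin{align*}
|\,cost_A(g(S')) - OPT_A(\phi)\,| &= u(\tau) - u^*(\phi) \le (c - 8m) - u^*(\phi) \\
&\le c - (8m + u^*(\phi)) = cost_B(S') - OPT_B(f(\phi)),
\end{align*}
using $cost_A(g(S')) = m - u(\tau) \le OPT_A(\phi)$ and $u(\tau) \le c - 8m$; so $\beta = 1$ works. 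The same $f$, $g$, $\alpha$, $\beta$ serve both \BCEVS{} and \BCEOVS{}, since the canonical solution uses only edge deletions and the charging analysis is indifferent to which side is split (and $bcevs(G_\phi) \le bceovs(G_\phi,A)$ in any case). As MAX 3-SAT(4) is \textsc{APX}-hard, so are \BCEVS{} and \BCEOVS{}.

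The main obstacle is the effective lower bound, i.e.\ the construction and analysis of $g$. Two points are delicate: first, that rounding a variable's cycle operations to the nearest canonical cut costs at most as much excess as the number of clauses incident to that variable it can spoil; second, the gadget bookkeeping, where one must show that a clause gadget not resolved ``in the canonical way'' genuinely carries a surplus operation, which is the quantitative analogue of the $a_0 = a_1 = 0$ conclusion of Theorem~\ref{theorem:npc}. Getting the twice-positive/twice-negative variables right --- the very reason the ordering convention was imposed --- is the subtlest part of this accounting, since for such a variable a single extra operation in its cycle must not be allowed to ``buy'' more than one satisfied clause.
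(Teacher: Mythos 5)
Your overall architecture coincides with the paper's: the same Construction~\ref{construction:reduction} serves as $f$, the same read-off of a truth assignment from the cycle deletions serves as $g$, the same identity $OPT_B(f(\phi)) = 8m + u^*(\phi)$ underlies both directions, and $\beta = 1$ is obtained by the same inequality chain. The only cosmetic difference is the constant $\alpha$: you use the trivial bound $OPT_A(\phi)\ge m/2$ to get $\alpha = 18$, whereas the paper invokes H\aa stad's $7/8$-satisfiability to get $\alpha = 72/7$; either is fine for a linear reduction.

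However, the proposal is not a complete proof, and you say so yourself: the ``effective lower bound'' --- that from \emph{any} solution $S'$ of cost $c$ one can extract an assignment leaving at most $c - 8m$ clauses unsatisfied --- is exactly the substance of the theorem, and you defer it. This is not a routine verification. The paper's argument here is a case analysis on the number of operations $|Op(v)|$ spent inside each variable cycle: if $|Op(v)| = 8 = 2d(v)$, Lemma~\ref{lemma:cycle} forces one of the three canonical cuts and the assignment is read off directly; if $|Op(v)| \ge 10$, the two surplus operations can be replaced by two splits of the cycle vertices $v_0$ and $v_{12}$ attached to the two positive (or two negative) occurrences, satisfying both of those clauses at a cost of one surplus operation each; and the critical intermediate case $|Op(v)| = 9$ requires showing that a single surplus operation in the cycle can resolve the clause-side geodesics of at most one additional occurrence of $v$ --- this is precisely where the ordering convention (positive occurrences at $v_0$ and $v_{12}$, i.e.\ diametrically placed, for variables occurring $2{+}2$ times) is used, via an explicit list of pairwise-independent geodesics showing that $v_0, v_8, v_{12}, v_{20}$ cannot all be ``serviced'' by one extra operation. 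Your proposal names this as the subtlest point but does not carry out the independence argument, so the charging scheme that every unsatisfied clause absorbs one unit of excess remains unestablished. Until that case analysis is done, the inequality $u(\tau)\le c - 8m$ --- and with it the whole reduction --- is unproven.
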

\begin{proof}
First, note that it is $\NP$-hard to approximate {\sc MAX 3-SAT(4)} to any factor $\epsilon_4 \leq 1.00052$ 
~\cite{karpinski}. 

Consider an instance $\phi$ of {\sc MAX 3-SAT(4)}.
We denote by $M$ the number of clauses of $\phi$.
According to \cite{Hastad97}, there exists an assignment which satisfies at least $\frac{7}{8}$ of the clauses.
By denoting $OPT(\phi)$ the maximum number of clauses that can be satisfied by an assignment, we have
  \begin{equation}
    OPT(\phi) \geq \frac{7M}{8}.\label{eq:7m by 8}
  \end{equation}

\vspace{5pt}
Let us show that Construction~\ref{construction:reduction} is a linear reduction.
Let $f$ be a function mapping an instance $\phi$ of {\sc MAX 3-SAT(4)} into the graph $G_\phi$ obtained by Construction~\ref{construction:reduction}.
Let $\phi$ be such an instance.
We denote by $G$ the graph obtained from Construction~\ref{construction:reduction}.

Let $X$ be a sequence of edge editions and splits turning $G$ into a disjoint union of bicliques such that $X$ deletes exactly 8 edges per variable gadget and deletes two edges in each clause gadget.

We denote by $cost(X)$ the length of the sequence.
Let $g$ be the function that transforms $X$ into a boolean assignment as constructed in the proof of Theorem~\ref{theorem:npc}: each variable $v$ is set to true if $X$ deletes all the edges $v_{2+2k} v_{3+3k}$ for every $k$, and false, otherwise.

We denote by $cost(g(X))$ the number of satisfied clauses.
If a clause is not satisfied, then its corresponding clause gadget contains three splits and two otherwise.

Delete the edges of the clauses and delete the edges $v_{3i} v_{3i+1}$ for every variable $v$ for every $i$.
The graph obtained is a disjoint union of bicliques.
As we use $3M$ operations for the clauses and $6M$ operations for the variables, we have
\begin{equation}
  OPT(G_\phi) \leq 6M + 3M \leq 9M  \stackrel{\eqref{eq:7m by 8}}{\leq} 9\cdot \frac{8}{7} OPT(\phi) \leq \frac{72}{7} OPT(\phi) \label{eq:I and phi}.
\end{equation}

Let $\sigma$ be an assignment of $\phi$ maximizing the number of satisfied clauses of $\phi$.
Let us define a sequence of operations on $G_\phi$.
For every true (resp. false) variable $v$, we delete the edges $v_{3i} v_{3i}$ (resp. $v_{3i+2} v_{3i+3}$) for every $i$.
For every clause $c$ with variables $u,v$ and $w$, if $c$ is satisfied we can suppose that $v$ is satisfying $c$, we delete the edges $c u_c$ and $c w_c$ (like in the proof of Theorem~\ref{theorem:npc}).
Otherwise we delete the three edges $c u_c$, $c v_c$ and $c w_c$.
We denote by $SC$ the set of satisfied clauses of $\phi$ by $\sigma$ and $UC$ the set of unsatisfied clauses of $\phi$ by $\sigma$.
This sequence is of length $\sum_{v \in V} 2 d(v) + \sum_{c \in SC} 2 + \sum_{c \in UC} 3 = 6M + 2M + k$ where $k$ is the number of unsatisfied clauses.
Furthermore this sequence of operations turns the graph $G_\phi$ into a disjoint union of bicliques.
We deduce that $OPT(G_\phi) \leq 8M + M-OPT(\phi)$.
Thus $OPT(\phi) + OPT(G_\phi) \leq 8M + M$.

Let $X$ be a sequence of operations on $G_\phi$ turning this graph into a disjoint union of bicliques.
Let us define the assignment $\sigma$ as follows.

Consider a variable $v$.
According to Lemma~\ref{lemma:cycle}, the variable cycle of $v$ needs at least $2d(v) = 8$ operations on its edges and its vertices.

If the variable cycle is using exactly $8$ operations, then because of Lemma~\ref{lemma:cycle}, there are three cases.
If the edges $v_{1+3i} v_{2+3i}$ are deleted for every $i$, then we assign $v$ to positive and to negative otherwise.

If the variable cycle is using at least $8+2$ operations, then we replace these operations by deleting the edges $v_{3i} v_{3i+1}$ for every $i$ and by splitting the vertices $v_0$ and $v_{12}$ so that it disconnects the variable cycle and the two positive clauses connected to $v$.
We assign $v$ to negative.

If the variable cycle is using exactly $8+1$, let us show that it is not possible to solve the geodesics $c_0, v_0, v_1, v_2$, $c_0, v_0,v_{23}, v_{22}$, $c_1, v_8, v_7, v_6$, $c_1, v_8, v_9, v_{10}$, $c_2, v_{12}, v_{11}, v_{10}$, $(c_2, v_{12}, v_{13}, v_{14})$, $(c_3, v_{20}, v_{19}, v_{18})$ and $(c_4, v_{20}, v_{21}, v_{22})$.
Then at least one of the edge $c_0 v_0$, $c_1 v_8$, $c_2 v_{12}$ and $c_3 v_{20}$ should be deleted or one of the vertex $v_0$, $v_8$, $v_{12}$ and $v_{20}$ should be split.
Otherwise we can find $10$ geodesics of length 4 which are two by two independent.
Therefore only 3 clauses geodesics can be resolved. 
If the two clauses where $v$ appears positively, then we assign $v$ to positive.
Otherwise we assign $v$ negatively.

In any case we remark that we need to use at least one operation for every unsatisfied clause.
We denote by $UC$ the number of unsatisfied clauses by the assignment $\sigma$.
We deduce that $UC \leq cost(X) - 8M$.
Thus
\begin{align*}
OPT(\phi) + OPT(G_\phi) &\leq 9M  \leq cost(X) +M - UC \\
OPT(\phi) + OPT(G_\phi) &\leq 9M  \leq cost(X) + cost(g(X)) \\
OPT(\phi) - cost(g(X)) &\leq cost(X) - OPT(G_\phi)
\end{align*}

Thus, we have constructed a linear reduction with $\alpha=\frac{72}{7}, \beta=1$. 
As {\sc MAX 3-SAT(4)} is {\sc APX}-hard, we deduce that \BCEVS{} and \BCEOVS{} are \APX-hard.
\end{proof}

\section*{Acknowledgements}

This research project was supported by the Lebanese American University under the President’s Intramural Research Fund PIRF0056.

\section{Concluding Remarks}

This paper introduces the \textsc{Bicluster Editing with Vertex Splitting} problem (\BCEVS{}) and the \textsc{Bicluster Editing with One-Sided Vertex Splitting} problem (\BCEOVS{}).
Both \BCEVS{} and \BCEOVS{} have been shown to be \NP-complete even when restricted to  bipartite planar graphs of degree at most three. We also proved the two problems are APX-hard. On the positive side, a fixed-parameter algorithm was presented for \BCEOVS{} and the two problems are proved to be solvable in polynomial-time on trees. 
This latter result might seem to be of limited importance, but it suggests that the problems might be fixed-parameter tractable when parameterized by the treewidth of the input graphs, which is hereby posed as an open problem.

Future work may focus on proving the \NP-completeness of (either of) the two problems on other classes of bipartite graphs.
The APX-hardness of the problems leads to the question whether finding a polynomial time constant-factor approximation is possible.
We have further shown that \BCEOVS{} is \FPT with respect to the number $k$ of operations. 
Thiswith was the result of presenting a kernelization algorithm with a kernel bound in \(O(k^5)\).
More recently, and capitalizing on the work presentd in our  conference version, Bentert et al. obtained a quadratic size kernel bound and presented a fixed-parameter algorithm that runs in $\bigo(k^{11k}+n+m)$. It would be interesting to look for a better fixed-parameter algorithm. Most importantly, would it be possible (modulo ETH, for example) to obtain a $\bigo^*
(c^k)$ algorithm? The same questions would be also interesting in the case of 
\BCEVS{}.
Another interesting future work would be to consider other auxiliary parameters like, in particular, twin-width; as well as local parameters (as in \cite{abu2017complexity,Heggernes10,komusiewicz2012cluster}) such as a bound on the number of times a vertex can be split. This latter bound is motivated by real applications where a data element cannot belong to an arbitrary large number of (bi)clusters.

\end{document}